\newtheorem{Theorem}{Theorem}
\newtheorem{Corollary}[Theorem]{Corollary}
\newtheorem{Assumption}{Assumption}
\theoremstyle{definition}
\newtheorem{Example}[Theorem]{Example}
\newcommand{\ket}[1]{\left| #1 \right>}
\newcommand{\iprod}[2]{\langle #1 | #2 \rangle}
\newcommand{\oprod}[2]{| #1 \rangle\langle #2 |}
\newcommand{\seq}[1]{\mathbf{#1}}
\newcommand{\m}[1]{\widetilde{#1}}
\begin{document}

\sloppy

\title{Strong Kochen-Specker Theorem and Incomputability of Quantum Randomness}

\author{Alastair A. Abbott}
\email{a.abbott@auckland.ac.nz}
\homepage{http://www.cs.auckland.ac.nz/~aabb009}

\author{Cristian S. Calude}
\email{cristian@cs.auckland.ac.nz}
\homepage{http://www.cs.auckland.ac.nz/~cristian}

\author{Jonathan Conder}
\email{jonathan.conder@auckland.ac.nz}

\affiliation{Department of Computer Science, University of Auckland,
Private Bag 92019, Auckland, New Zealand}

\author{Karl Svozil}
\affiliation{Institute for Theoretical Physics,
Vienna  University of Technology,
Wiedner Hauptstrasse 8-10/136,
1040 Vienna ,  Austria}
\email{svozil@tuwien.ac.at}
\homepage{http://tph.tuwien.ac.at/~svozil}

\date{\today}

\begin{abstract}
The Kochen-Specker theorem shows the impossibility for a hidden variable theory to consistently assign values to certain (finite) sets of observables in a way that is non-contextual and consistent with quantum mechanics.
If we require non-contextuality, the consequence is that many observables must not have pre-existing definite values.
However, the Kochen-Specker theorem does not allow one to determine \emph{which} observables must be value indefinite.
In this paper we present an improvement on the Kochen-Specker theorem which allows one to actually locate observables which are \emph{provably value indefinite}.

Various technical and subtle aspects relating to this formal proof and its connection to quantum mechanics are discussed.
This result is then utilized for the proposal and certification of a dichotomic quantum random number generator operating in a three-dimensional Hilbert space.
\end{abstract}

\pacs{03.67.Lx, 05.40.-a, 03.65.Ta, 03.67.Ac, 03.65.Aa}
\keywords{Kochen-Specker theorem, quantum value indefiniteness, quantum randomness, quantum indeterminism, random processes}
\preprint{CDMTCS preprint nr. 422/2012}
\maketitle

\section{Located quantum value indefiniteness}

While Bell's theorem \cite{bell-66} expresses the impossibly for a local hidden variable theory to
give the same statistical results as quantum mechanics, the Kochen-Specker theorem \cite{specker-60,kochen1}
proves the impossibility for a hidden variable theory to even assign values to certain (finite) sets of observables in a way that is non-contextual and consistent with quantum mechanics.
More precisely, it expresses a contradiction between the following presuppositions:
\begin{itemize}
        \item[(P1)]
        the set of observables in question \footnote{Which, due to complementarity, may not be all simultaneously co-measurable (i.e., formally, commuting)} have pre-assigned definite values,
        \item[(P2)]
        the outcomes of measurements of observables are non-contextual; that is, they are independent of whatever other
co-measurable observables are measured alongside them,
\end{itemize}
along with the requirement that the relationship between hidden variables associated with
sets of co-measurable observables behave quasi-classically, as expected from quantum mechanics.
This requirement means that in any ``complete'' set of mutually co-measurable yes-no propositions
(represented by mutually orthogonal projectors spanning the Hilbert space) exactly one proposition should be assigned the value ``yes.''

Thereby, the Kochen-Specker theorem does {\em not explicitly identify}
certain particular observables which violate one or more of these presuppositions.
Indeed, the Kochen-Specker theorem has not been designed to actually
{\em locate} the particular observable(s) which would violate the assumptions.
This is not seen as a deficiency of the theorem,
because its content suffices for the many (mostly metaphysical)
purposes it has been designed for and applied to.

In what follows we shall pursue a threefold agenda.
First, we shall make explicit and formalize the physical notions involved,
in particular, value (in-)definiteness and contextuality.
We shall thereby remain within the formalism of quantum logic,
as outlined by Birkhoff and von Neumann \cite{v-neumann-55,birkhoff-36},
as well as by Kochen and Specker \cite{kochen2,kochen3}.

This enables us to specify exactly the actual
{\em location of breakdown of classicality} within the set of Kochen-Specker observables;
that is, we  identify the observables for which classicality
inadvertently renders complete contradictions,
no matter what their (classical) outcome or value may be.
In order to do this, we prove a modified version of the original
Kochen-Specker theorem in which we obtain a contradiction between
the presupposition (P2) and a crucially weaker version of (P1).
%Just like in the original Kochen-Specker proof by contradiction,
%in order to avoid these complete contradictions,
%one of the presuppositions (P1)--(S3) must be abandoned.

Second, we will clarify in what sense the Kochen-Specker and Bell-type theorems imply
the violation of the non-contextuality assumption (P2).
Formalization has become necessary because in the literature the term ``contextuality''
is often identified with violations of certain Bell-type inequalities
on single quanta \cite{hasegawa:230401,Bartosik-09,kirch-09,PhysRevLett.103.160405}
in the absence of strict locality conditions \cite{cabello:210401}.

We point out that, while from a purely logical point of view,  violation of the non-contextuality assumption (P2)
is {\em sufficient} to interpret the Kochen-Specker theorem,
it is by no means {\em necessary} for, or implied by,
the Kochen-Specker theorem.
Indeed, violation of the primary assumption of value definiteness (P1) presents a viable (albeit also not necessary, as other, more
exotic, possibilities demonstrate; e.g., Ref. \cite{pitowsky-82}) option to interpret the Kochen-Specker theorem.

Third, we shall also consider which  collections of observables do not render Kochen-Specker contradictions.
Restricting ourselves to these very limited collections would allow maintenance of assumptions (P1) and (P2)
about quantized systems, but would also reduce the domain of conceivable observables dramatically.

The results presented can be interpreted as one natural consequence of,
and advancement beyond, the Kochen-Specker theorem.
They may be particularly important if we  investigate the concrete ``underpinning'' of the Kochen-Specker theorem:
exactly why and where a quantized system disobeys   classicality.

Apart from foundational issues,
there is also a concrete application which profits from such quantum information theoretic findings.
Contemporary quantum random number generators can no longer be based upon and certified by
our conviction in the quantum postulate of {\em complementarity} alone.
They should also be certified by
strictly stronger forms of non-classicality than complementarity,
{\em quantum value indefiniteness} being one of them \footnote{Note that there exist models of complementarity such as automaton logic or generalised urn models which are value definite \cite{svozil-2001-eua}.}.
For these purposes, the Kochen-Specker theorem, as well as other Bell-type theorems,
serve merely as {\em indications} that  quantum value indefiniteness possibly ``happens somewhere'' because it cannot be excluded that particular individual quanta \footnote{In the Bell-type cases all observables,
and in the Kochen-Specker case ``many'' observables.} could still be value definite.

Unfortunately, by their very design, these theorems cannot guarantee that a particular observable actually {\em is} value indefinite.
One could, for instance, not exclude that a ``demon'' could act in such a way that all observables actually measured would be value definite,
whereas other observables which are not measured would be value indefinite.

However, for quantum random number generators we need certification of value indefiniteness on the
{\em particular observables utilized for that purpose}.
Thus, one needs a different (in the sense of locatedness of violation of non-classicality, stronger) type of theorem
than Kochen and Specker present, an argument that could (formally) {\em assure} that, if quantum mechanics is correct,
the particular quantum observables used for the generation of random number sequences are {\em provably value indefinite,} hence  the measured quantum sequences cannot refer to any consistent property of the measured quanta alone.

This article presents such an argument, which will be utilized for a dichotomic quantum random number generator operating
in a three-dimensional Hilbert space.
By now it should be clear that such a device would be strictly preferential to previous proposals
using merely quantum complementarity, or, in addition to that, some type of non-located violations of global value definiteness.

In what follows we shall first present the basic definitions, then state and prove the aforementioned result,
and subsequently apply this result to
the proposal of a quantum random number generator based on {\em located quantum value indefiniteness} which produces, as we prove, a strongly incomputable sequence of bits.

\section{Definitions}% and Relevant Results}

\subsection{Notation and formal framework}

%We denote the set of natural numbers (in which we include 0) by $\mathbb{N}=\{0,1,2,\dots\}$, the positive integers by $\mathbb{N}^+=\mathbb{N}\setminus\{0\}$, and
As usual we denote the set of complex numbers by $\mathbb{C}$
and use the standard quantum mechanical bra-ket notation;
that is, we denote vectors in the Hilbert space $\mathbb{C}^n$ by $\ket{\cdot}$.
%If we fix an orthonormal basis for $\mathbb{C}^n$ as $\{\ket{a_1},\dots,\ket{a_n}\}$, then an arbitrary $\ket{\psi}\in\mathbb{C}^n$ can be written $\ket{\psi}=\sum_{i=1}^n c_i\ket{a_i}$ where $c_i\in \mathbb{C}$ for $1\le i \le n$.
%If $\ket{\phi}=\sum_{i=1}^n d_i\ket{a_i}$ is another vector in $\mathbb{C}^n$, then the inner product of $\ket{\psi}$ and $\ket{\phi}$ is $\iprod{\phi}{\psi}=\sum_{i=1}^n c_i d_i^*$.
%The outer product $\oprod{\psi}{\phi}$ is an $n\times n$ complex  matrix where the entry at row $i$ and column $j$ is $(\oprod{\psi}{\phi})_{i,j}=c_i d_j^*$.
We will have particular interest in the projection operators %(represented by Hermitian matrices)
 projecting on to the linear subspace spanned by a non-zero vector $\ket{\psi}$, namely $P_\psi = \frac{\oprod{\psi}{\psi}}{\iprod{\psi}{\psi}}\raisebox{.3mm}{;}$
we will use this notation for projection operators throughout this paper.
We briefly note that in this paper we only consider pure quantum states,
and will accordingly not explicitly specify quantum states as pure states as opposed to mixed states.

%\subsection{Formal framework}

%This formal framework is presented for the purpose of our discussion of hidden variable theories in quantum mechanics.

In order to discuss hidden variable theories precisely and without any of the ambiguity that is common in such discussion,
we present an explicit formal framework in which we will work.

We fix a positive integer $n$.
Let $\mathcal{O} \subseteq \{ P_\psi \mid \ket{\psi} \in \mathbb{C}^n \}$ be a nonempty set of projection observables in the Hilbert space $\mathbb{C}^n$ and $\mathcal{C}\subseteq \{\{P_1,P_2,\dots P_n\} \mid P_i \in \mathcal{O} \text{ and } \iprod{i}{j}=0 \text{ for } i\neq j \}$ a set of measurement contexts over $\mathcal{O}$.
A context $C\in\mathcal{C}$ is thus a maximal set of compatible (i.e. they can be simultaneous measured) projection observables.
Let $v : \{(o,C) \mid o\in \mathcal{O}, C\in\mathcal{C}\text{ and } o\in C \} \xrightarrow{o} \{0,1\}$ be a partial function
(i.e., it may be undefined for some values in its domain).
For some $o,o'\in \mathcal{O}$ and $C,C'\in\mathcal{C}$ we say $v(o,C)=v(o',C')$ if $v(o,C),v(o',C')$ are both defined and have equal values.
If either $v(o,C)$ or $v(o',C')$ are not defined or they are both defined but have different values, then $v(o,C)\neq v(o',C')$.
We will call $v$ an \emph{assignment function}, and it expresses the notion of a hidden variable: it specifies in advance the result obtained from the measurement of an observable.

%So far the framework is completely abstract, but, in order to discuss quantum mechanics, we need to remove some of this abstraction and specify some components.
%In particular, we consider quantum mechanical projection observables acting on a Hilbert space, and contexts are complete sets of compatible observables (i.e.\ sets of $n$ orthogonal projectors; see below).

%The assignment function $v$ expresses the notion of a hidden variable: it specifies in advance the result obtained from the measurement of an observable.

%Formally we fix our framework as follows: with $n$ fixed we consider the Hilbert space $\mathbb{C}^n$.
%The observables $\mathcal{O} \subseteq \{ P_\psi \mid \ket{\psi} \in \mathbb{C}^n \}$ are projections onto one-dimensional linear subspaces of $\mathbb{C}^n$.
%These operators can be shown to commute exactly when the corresponding subspaces are orthogonal (or equal).
%Consequently, the relation $\perp$ on $\mathcal{O} \times \mathcal{O}$ is defined by: $P_\psi\perp P_\phi$ if and only if $\ket{\psi}$ is orthogonal to $\ket{\phi}$ (i.e. $\iprod{\psi}{\phi}=0$).
%Therefore $P_\psi\perp P_\phi$ implies that $P_\psi$ and $P_\phi$ are compatible, so contexts can be viewed as maximal sets of mutually compatible projection operators.

An observable $o\in C$ is \emph{value definite} in the context $C$ under $v$ if $v(o,C)$ is defined.
Otherwise $o$ is \emph{value indefinite} in $C$.
If $o$ is value definite in all contexts $C\in \mathcal{C}$ for which $o\in C$ then we simply say that $o$ is value definite under $v$.
Similarly, if $o$ is value indefinite in all such contexts $C$ then we say that $o$ is value indefinite under $v$.
The set $\mathcal{O}$ is \emph{value definite} under $v$ if every observable $o\in\mathcal{O}$ is value definite under $v$.
This notion of value definiteness corresponds to the classical notion of determinism:
an observable is value definite if $v$ assigns it a definite value---i.e.\ we are able to predict in advance the value obtained via measurement.

An observable $o\in \mathcal{O}$ is \emph{non-contextual} under $v$ if for all contexts $C,C' \in \mathcal{C}$ with $o\in C,C'$ we have $v(o,C)=v(o,C')$. Otherwise, $v$ is \emph{contextual}.
Note that an observable which is value indefinite in a context is always contextual even if it takes the same value in every context in which it is value definite.
On the other hand, if an observable is value definite in all contexts that it is in, it can be either contextual or not (and in the latter case its value is constant in all contexts containing it) depending on $v$.
The set of observables $\mathcal{O}$ is \emph{non-contextual} under $v$ if every observable $o\in\mathcal{O}$ which is not value indefinite (i.e.\ value definite in \emph{some} context) is non-contextual under $v$.
Otherwise, the set of observables $\mathcal{O}$ is \emph{contextual}.
Further, we say that the set of observables $\mathcal{O}$ is \emph{strongly contextual} under $v$ if every observable $o\in\mathcal{O}$ is contextual under $v$.
Non-contextuality corresponds to the classical notion that the value obtained via measurement is independent of other compatible observables measured alongside it.

Every strongly contextual set of observables under $v$ is contextual under $v$, provided that $v$ is not undefined everywhere.
However the converse implication is false, as
we will discuss in Sec~\ref{sec:strongContextuality}.
%will follow from Theorem~\ref{nostrongcontextuality}.

If an observable $o$ is non-contextual then it is value definite, but this is not true for sets of observables: $\mathcal{O}$ can be non-contextual but not value definite if it contains an observable which is value indefinite.

An assignment function $v$ is \emph{admissible} if the following hold for all $C \in \mathcal{C}$:
\begin{itemize}
        \item if there exists an $o\in C$ with $v(o,C)=1$, then $v(o',C)=0$ for all $o'\in C\setminus\{o\}$,
        \item if there exists an $o\in C$ such that $v(o',C)=0$ for all $o'\in C\setminus \{o\}$, then $v(o,C)=1$.
\end{itemize}

In the discussion of hidden variables, we do not concern ourselves with the mechanism of $v$, but rather with its possible existence subject to certain constraints (specifically, the admissibility of $v$---we justify this more fully in Sec~\ref{sec:interpretation}---requires that functions of the values associated with compatible observables satisfy the predictions of quantum theory).
The notion of admissibility serves as an analog to the notion of a \emph{two-valued (dispersionless) measure} that is used in quantum logic
\cite{ZirlSchl-65,kalmbach-86,Alda,Alda2,pulmannova-91,svozil-tkadlec},
the difference being that the definition is sound even when not all observables are value definite.
This distinction is subtle but, nevertheless, will allow us to formulate known results, such as the Kochen-Specker theorem
%\cite{specker-60,kamber64,kamber65,ZirlSchl-65,kochen1,Alda,Alda2},
\cite{kochen1},
 as well as the stronger results which we present in this paper.
However, we stress that this is still a purely formal framework and that, in order to make a connection to physical reality, further assumptions must be made, specifically pertaining to the nature of measurement;
we defer this connection to physical reality to Sec~\ref{sec:interpretation}.

We briefly note that this formal framework could be presented in an even more more abstract setting without reference to Hilbert spaces, but for the sake of concreteness we avoid this here.
\if01
\begin{Example}
        As an example to aid familiarity with our definitions, let $n=3$ and consider the set of observables $\mathcal{O}=\{0,1,2,3,4,5,6,7,8\}$ and contexts $\mathcal{C}=\{C_1=\{0,1,2\},C_2=\{0,3,4\},C_3=\{0,5,6\},C_4=\{6,7,8\}\}$.
This configuration is depicted by a Greechie\footnote{Observables are represented by circles, contexts by smooth line segments.}
orthogonality diagram \cite{greechie:71,pulmannova-91,svozil-tkadlec} in Fig.~\ref{2012-incomputability-proofs-f1}.
        Let our assignment function be defined as $$v(0,C_1)=v(0,C_2)=v(6,C_4)=1,$$ $$v(1,C_1)=v(2,C_1)=v(3,C_2)=v(4,C_2)=v(6,C_3)=v(7,C_4)=v(8,C_4)=0,$$ and undefined elsewhere.
        Observables 0 and 6 are both contextual: although $v(0,C_1)=v(0,C_2)=1,$ observable 0 is value indefinite in $C_3$ since $v(0,C_3)$ is not defined; observable 6 is value definite but we have $v(6,C_3)\neq v(6,C_4)$.
        Observable 5 is value indefinite, since it appears only in $C_3$ and $v(5,C_3)$ is not defined.
        The other observables only appear in one context, in which they are all defined, and are thus non-contextual.
        This set $\mathcal{O}$ is neither value definite nor non-contextual.
        The function $v$ is admissible, but the function $v'$ specified exactly as $v$, except that it is defined for observable 5 in $C_3$ as $v'(5,C_3)=0$, would not be admissible since the second condition for admissibility would not be satisfied in $C_3$.
\end{Example}

\begin{figure}[htbp]
\begin{center}
\begin{tikzpicture}
\tikzstyle{every path}=[line width=1pt]
\tikzstyle{every node}=[draw,line width=1pt,inner sep=0]

\tikzstyle{c1}=[circle,minimum size=6]
\tikzstyle{c2}=[circle,minimum size=12]
\tikzstyle{c3}=[circle,minimum size=18]

\tikzstyle{l1}=[draw=none,circle,minimum size=9]
\tikzstyle{l2}=[draw=none,circle,minimum size=15]
\tikzstyle{l3}=[draw=none,circle,minimum size=21]

\draw[Green] (0:0) -- (150:3)
        coordinate[c3,at start] (0)
        coordinate[c1,midway] (1)
        coordinate[c1,at end] (2);

\draw[Orange] (0.center) -- (30:3)
        coordinate[c2,at start] (0)
        coordinate[c1,midway] (3)
        coordinate[c1,at end] (4);

\draw[Red] (0.center) -- (270:3)
        coordinate[c1,at start] (0)
        coordinate[c1,midway] (5)
        coordinate[c2,at end] (6);

\draw[Blue] (6.center) -- (3,-3)
        coordinate[c1,at start] (6)
        coordinate[c1,midway] (7)
        coordinate[c1,at end] (8);

\coordinate[l3,label=90:0] (0) at (0.center);
\coordinate[l1,label=60:1] (1) at (1.center);
\coordinate[l1,label=60:2] (2) at (2.center);

\coordinate[l1,label=120:3] (3) at (3.center);
\coordinate[l1,label=120:4] (4) at (4.center);

\coordinate[l1,label=180:5] (5) at (5.center);
\coordinate[l2,label=180:6] (6) at (6.center);

\coordinate[l1,label=270:7] (7) at (7.center);
\coordinate[l1,label=270:8] (8) at (8.center);
\end{tikzpicture}
\end{center}
\caption{(Color online)
Greechie orthogonality diagram associated with the configuration of observables
 $\mathcal{C}=\{C_1=\{0,1,2\},C_2=\{0,3,4\},C_3=\{0,5,6\},C_4=\{6,7,8\}\}$.
Different contexts $C_i$ are drawn in different colours.
}
\label{2012-incomputability-proofs-f1}
\end{figure}
\fi

\subsection{Kochen-Specker theorem}

Using the framework developed, the Kochen-Specker theorem~\cite{kochen1}, which we outlined and discussed in the introduction, can be presented in the following more rigorous form:
if $n>2$ there exists a set of projection observables $\mathcal{O}$ on $\mathbb{C}^n$ and a set of contexts over $\mathcal{O}$ such that there is no admissible assignment function $v$ under which $\mathcal{O}$ is both non-contextual and value definite.
This proves that it is impossible for all projection observables to be value definite and non-contextual.

%The Kochen-Specker theorem~\cite{kochen1} shows that if $n>2$, then certain sets of observables in $\mathbb{C}^n$
%cannot be both value definite and non-contextual under any admissible assignment function.
%This proves that it is impossible for all projection observables to be value definite and non-contextual.
%The Kochen-Specker theorem can be readily presented using the concepts developed above.\\

%\begin{Theorem}[Kochen-Specker]
%       If $n>2,$ there exists a set of projection observables $\mathcal{O}$ on $\mathbb{C}^n$ and a set of contexts over $\mathcal{O}$ such that there is no admissible assignment function $v$ under which $\mathcal{O}$ is both non-contextual and value definite.
%\end{Theorem}

\subsection{Strong contextuality can not be guaranteed}
\label{sec:strongContextuality}

How strong is the incompatibility between non-contextuality and value definiteness stated in the Kochen-Specker theorem?
The theorem tells us that not every observable can be both non-contextual and value definite, but gives us no information regarding how far this incompatibility goes.
Here we show that this incompatibility cannot be maximal:
no set of observables is strongly contextual under every admissible value definite assignment function on it.
In other words, for any set of contexts over any set of observables, there exists an admissible assignment function under which the set of observables is value definite and at least one observable is non-contextual.\\
% under some admissible assignment function.\\
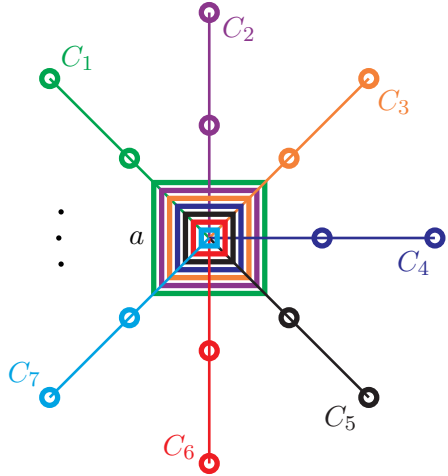
\begin{figure}[htbp] %[h]
\begin{center}
\begin{tikzpicture}
\tikzstyle{every path}=[line width=1pt]
\tikzstyle{every node}=[draw,line width=2pt,inner sep=0]

\tikzstyle{c1}=[rectangle,minimum size=6]
\tikzstyle{c2}=[rectangle,minimum size=12]
\tikzstyle{c3}=[rectangle,minimum size=18]
\tikzstyle{c4}=[rectangle,minimum size=24]
\tikzstyle{c5}=[rectangle,minimum size=30]
\tikzstyle{c6}=[rectangle,minimum size=36]
\tikzstyle{c7}=[rectangle,minimum size=42]

\tikzstyle{d1}=[circle,draw=none,fill,minimum size=2]

\tikzstyle{l7}=[draw=none,circle,minimum size=45]

\draw[Green] (0:0) -- (135:3)
        coordinate[c7,at start] (0)
        coordinate[c1,circle,midway] (1)
        coordinate[c1,circle,at end,label=35:$C_1$] (2);

\draw[Fuchsia] (0.center) -- (90:3)
        coordinate[c6,at start] (0)
        coordinate[c1,circle,midway] (3)
        coordinate[c1,circle,at end,label=350:$C_2$] (4);

\draw[Orange] (0.center) -- (45:3)
        coordinate[c5,at start] (0)
        coordinate[c1,circle,midway] (5)
        coordinate[c1,circle,at end,label=305:$C_3$] (6);

\draw[Blue] (0.center) -- (0:3)
        coordinate[c4,at start] (0)
        coordinate[c1,circle,midway] (7)
        coordinate[c1,circle,at end,label=260:$C_4$] (8);

\draw[Black] (0.center) -- (315:3)
        coordinate[c3,at start] (0)
        coordinate[c1,circle,midway] (9)
        coordinate[c1,circle,at end,label=215:$C_5$] (10);

\draw[Red] (0.center) -- (270:3)
        coordinate[c2,at start] (0)
        coordinate[c1,circle,midway] (11)
        coordinate[c1,circle,at end,label=170:$C_6$] (12);

\draw[Cerulean] (0.center) -- (225:3)
        coordinate[c1,at start] (0)
        coordinate[c1,circle,midway] (13)
        coordinate[c1,circle,at end,label=125:$C_7$] (14);

\coordinate[l7,label=180:$a$] (0) at (0.center);

\coordinate[d1] (.) at (190:2);
\coordinate[d1] (.) at (180:2);
\coordinate[d1] (.) at (170:2);
\end{tikzpicture}
\end{center}
\caption{(Color online)
Greechie orthogonality diagram\footnote{Observables are represented by circles and squares, contexts by smooth line segments.}
of the contexts in $S_a$ with an overlaid value assignment reflecting the
argument against strong-contextuality being guaranteed.
%proof of Theorem~\ref{nostrongcontextuality}
%associated with the configuration of contexts
%$S_{a}=\{C \mid C\in \mathcal{C} \text{ and } a \in C \}\subseteq \mathcal{C}$.
Different contexts $C_i$ are drawn in different colours, circles represent the value 0 and squares represent the value 1.
}
\label{2012-incomputability-proofs-f2}
\end{figure}

%\begin{Theorem}
%\label{nostrongcontextuality}
More precisely, let $\mathcal{O}$ be a set of projection observables and $\mathcal{C}$ a set of contexts over $\mathcal{O}$.
Then for every $a\in \mathcal{O}$ there exists an admissible assignment function $v$ such that $v(a,C)=1$ for every context $C\in\mathcal{C}$ with $o \in C$, and $\mathcal{O}$ is value definite under $v$.
%\end{Theorem}
%\begin{proof}
%Let $a\in\mathcal{O}$ be any observable, and
To see this, consider the set $S_{a}=\{C \mid C\in \mathcal{C} \text{ and } a \in C \}\subseteq \mathcal{C}$ of contexts in which $a$ appears.
If we define the assignment function $v_{a}$ for $C\in S_{a}$ by

\begin{equation*}
        v_{a}(o,C) =
        \begin{cases}
                1, & \text{for $o=a$,}\\
                0, & \text{for  $o\neq a$.}\\
        \end{cases}
\end{equation*}

It is clear this satisfies $\sum_{o\in C}v_{a}(o,C)=1$, for all $C\in S_{a}$.
For $C\in \mathcal{C}\setminus S_{a}$, the function $v_{a}$ can be defined in any arbitrary contextual way to satisfy
admissibility.
%$\sum_{o\in C}v_{a}(o,C)=1$.
The function $v_{a}$ is then admissible and assigns a definite value (namely 1) to the observable $a$ (which was arbitrarily chosen) in a non-contextual way---i.e.\ $v_a(a, C) = 1$ for all $C \in S_{a}$.
%\end{proof}

Note that the configuration of contexts   $S_{a}=\{C \mid C\in \mathcal{C} \text{ and } a \in C \}\subseteq \mathcal{C}$
amounts to a ``star-shaped'' Greechie orthogonality diagram, with the common observable $a$ at the center of the star,
as depicted in Fig.~\ref{2012-incomputability-proofs-f2}.

Indeed this should not be surprising in view of the predictions of quantum mechanics.
Specifically, for a physical system prepared in the state $\ket{\psi}$, the Born rule predicts that measurement of the projection observable $P_\psi$ should give the value 1 (non-contextually) with probability 1.
Nevertheless, it is important to place a bound on the degree of non-classicality~\cite{Elitzur199225,svozil-2011-enough} that we can guarantee.
In fact, it is possible to go further than we have and define $v_a$ to non-contextually assign the value 0 to each observable appearing on a ``ray'' of the star in Fig.~\ref{2012-incomputability-proofs-f2}.
This is a consequence of the fact no two observables on differing ``rays'' are compatible.

However, in the following we show that one cannot go much further than this.
%The probabilities determined by the Born rule are non-contextual in the sense that they depend on the relevant observable, but not the context in which it is measured.
%For most purposes the probabilities 0 and 1 could be associated with definite values, which leads to the question of whether any other observables can be non-contextually value definite at the same time. The answer to this is unclear, except for those which share one or more eigenstates with a value definite observable. However, we can find some observables for which this is not the case.\\
Specifically, in what are the main theoretical results of the paper, we show that there are pairs of observables (belonging to different contexts) such that at most one of them can be assigned the value 1 by an admissible assignment function under which $\mathcal{O}$ is non-contextual.
This finding is somewhat stronger than a similar result by Kochen and Specker \cite{kochen1,pulmannova-91}
derived from the (as Specker used to call them~\cite{Specker-priv}) ``bug''-type orthogonality diagrams (a sub-diagram of their diagram $\Gamma_1$), as not all observables are assumed to be value definite.
Instead, an observable is only deduced to be value definite where the admissibility of $v$ requires it to be so.

This difference allows us to deduce an even stronger result, with particular relevance to quantum random number generators: there are pairs of observables such that, if one of them is assigned the value 1 by an admissible assignment function under which $\mathcal{O}$ is non-contextual, the other must be {\em value indefinite}.
%In light of Theorem~\ref{thm:noonevalueindefinite},
This is the best guarantee of located value indefiniteness one could hope for, and we will make use of it in our proposal for a quantum random number generator.
The proof relies on the weaker result described above, so we demonstrate that first, and deduce the main result as a corollary.
Note that there are larger values than $\frac{3}{\sqrt{14}}$ for which these results are true.
However, this number is more than sufficient for our purposes, and the larger values we found require significantly longer proofs.\\

\begin{Theorem}
\label{thm:twonotvaluedefinite}
        Let $\ket{a}, \ket{b}\in \mathbb{C}^3$ be unit vectors such that $0 < |\iprod{a}{b}| \le \frac{3}{\sqrt{14}}\raisebox{.8mm}{.}$
        Then there exists a set of projection observables $\mathcal{O}$ containing $P_a$ and $P_b$, and a set of contexts $\mathcal{C}$ over $\mathcal{O}$, such that there is no admissible assignment function under which $\mathcal{O}$ is non-contextual and $P_a$, $P_b$ have the value 1.
\end{Theorem}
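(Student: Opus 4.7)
The plan is to construct an explicit finite configuration of projection observables in $\mathbb{C}^3$ containing $P_a$ and $P_b$ — namely a ``bug''-type Greechie orthogonality diagram in the spirit of Kochen and Specker's $\Gamma_1$ — and to show, by propagating the admissibility rules through this diagram, that the simultaneous assignment $v(P_a, \cdot) = v(P_b, \cdot) = 1$ is impossible. First, without loss of generality I would replace $\ket{b}$ by $e^{i\theta}\ket{b}$ to make $\iprod{a}{b}$ a positive real number, and then set up coordinates in $\mathbb{R}^3 \subseteq \mathbb{C}^3$ with $\ket{a}$ along a chosen axis and $\ket{b}$ in a coordinate plane, so the only remaining free parameter is the angle $\theta$ determined by $\cos\theta = |\iprod{a}{b}|$.

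Next, I would exhibit a specific finite list of intermediate unit vectors $\ket{v_1},\ldots,\ket{v_k}$ grouped into orthogonal triples that I declare to be the contexts $\mathcal{C}$, with $\ket{a}$ and $\ket{b}$ each appearing in several of these triples. The combinatorial structure should be chosen so that, starting from $v(P_a, C) = 1$ in every context $C$ containing $P_a$ (a consequence of non-contextuality and the assumption on $P_a$), the first admissibility condition forces $v = 0$ on all observables orthogonal to $\ket{a}$ within those contexts; by non-contextuality these zeros carry across to the other contexts containing the same observables; and then the second admissibility condition ``completes'' any context with two known zeros by assigning the third the value $1$. Iterating this propagation through the bug should eventually deposit a value $1$ on an observable that lies in a context also containing $P_b$ and that is orthogonal (distinct from $P_b$), which contradicts $v(P_b, \cdot) = 1$ via the first admissibility rule.

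The hard part will be the geometric realizability: exhibiting unit vectors in $\mathbb{C}^3$ satisfying all the prescribed orthogonality relations for \emph{every} admissible value of $\theta$ with $0 < \cos\theta \le \tfrac{3}{\sqrt{14}}$. Each intermediate vector lies on a one- or two-parameter family (unit sphere intersected with the orthogonal complements imposed by its neighbors), and the system of constraints threaded through the bug produces a polynomial condition relating $\cos\theta$ to the internal free parameters. I expect the bound $\tfrac{3}{\sqrt{14}}$ to emerge as the largest value of $\cos\theta$ for which this system admits a solution of genuinely distinct, non-degenerate unit vectors — beyond it some required vector collapses to another or fails to exist as a real solution. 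Accordingly, the bulk of the proof is a careful parametrization (e.g.\ via successive cross products of previously chosen orthogonal pairs), followed by verifying non-degeneracy and orthogonality by direct computation on the resulting explicit coordinates.

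Once the configuration is exhibited and its realizability is checked, the final step — showing that any admissible non-contextual assignment with $v(P_a) = v(P_b) = 1$ leads to a contradiction — is a finite, purely combinatorial case analysis on the fixed diagram, and will occupy comparatively little space. I would present the vectors and contexts in a small table or Greechie diagram and then walk through the propagation explicitly, concluding that no such $v$ exists.
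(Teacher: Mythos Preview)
Your overall strategy---build a finite orthogonality configuration containing $P_a$ and $P_b$, then propagate admissibility and non-contextuality from $v(P_a)=v(P_b)=1$ to a contradiction---matches the paper's. Where you diverge is in how the geometry is handled across the range $0<|\iprod{a}{b}|\le\tfrac{3}{\sqrt{14}}$.

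You propose a single $\theta$-parametrised family of bug diagrams and expect $\tfrac{3}{\sqrt{14}}$ to appear as the degeneration threshold beyond which the vectors fail to exist. The paper does \emph{not} do this, and in fact your expectation about the meaning of $\tfrac{3}{\sqrt{14}}$ is off: the paper remarks explicitly that the result holds for larger values too, so this number is not an extremal point of any natural family. Instead the paper proceeds in two steps. First it treats only the equality case $|\iprod{a}{b}|=\tfrac{3}{\sqrt{14}}$, fixing coordinates so that $\ket{a}=(1,0,0)$ and $\ket{b}=\tfrac{1}{\sqrt{14}}(3,2,1)$ and writing down one explicit diagram of $24$ contexts with integer-coordinate vectors; the propagation through this fixed diagram is then a routine table-check. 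Second, for the strict inequality $0<|\iprod{a}{b}|<\tfrac{3}{\sqrt{14}}$, it \emph{reduces to the equality case}: one constructs a vector $\ket{c}$ with $|\iprod{a}{c}|=\tfrac{3}{\sqrt{14}}$ together with $\ket{\alpha}=\ket{a}\times\ket{c}$ and $\ket{\beta}=\ket{b}\times\ket{c}$, checks that $\iprod{\alpha}{\beta}=0$ (this is where the inequality $|\iprod{a}{b}|<\tfrac{3}{\sqrt{14}}$ is used, to ensure a real $\ket{c}$ exists), and observes that three extra contexts force $v(P_c)=1$ from $v(P_a)=v(P_b)=1$. Then the fixed $24$-context diagram applied to the pair $(\ket{a},\ket{c})$ gives the contradiction.

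This two-step route buys you a great deal: you only have to verify orthogonalities and run the propagation for \emph{one} numerical configuration, rather than carrying a free parameter through a polynomial system and checking non-degeneracy over an interval. Your parametrised approach is not wrong in principle, but it is substantially harder to execute cleanly, and your guess about the origin of the bound would lead you to look for structure that isn't there.
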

\begin{proof}
        We first show that the theorem holds under the equality $|\iprod{a}{b}|=\frac{3}{\sqrt{14}}\raisebox{.8mm}{,}$ and then, by means of a reduction to the case of equality, show it also holds for $0<|\iprod{a}{b}|<\frac{3}{\sqrt{14}}\raisebox{.8mm}{.}$

        By choosing the basis appropriately, without loss of generality we may assume that $\ket{a}\equiv(1,0,0)$ and $\ket{b}\equiv\frac{1}{\sqrt{14}}(3,2,1)$.
        Let $\ket{\psi}=(0,1,0)$ and $\ket{\phi}=(0,0,1)$.

        In Table~\ref{2012-incomputability-proofs-table:greechie} we define 24 contexts $C_1, C_2, \dots, C_{24}$, which are numbered by the column headings.
        Each row vector $\ket{\varphi}$ in the table is defined relative to the afore-chosen basis $\{ \ket{a}, \ket{\psi}, \ket{\phi} \}$, and is understood to represent the corresponding projection observable $P_\varphi$.
        For brevity, we have omitted commas, brackets and normalisation constants from these vectors, and have used the notation $\m{n} = -n$.
        % for $n \in \mathbb{N}$.
        As an example, $C_1 = \{ P_a, P_\psi, P_\phi \}$.
        % The observables in each triple are mutually compatible, because the corresponding vectors are mutually orthogonal.

        \begin{table}
                \caption{Assignment table containing the representation of observable propositions (projectors), together with the contexts in which they appear.
                         See Fig.~\ref{2012-incomputability-proofs-fig:greechie} for an illustration of these.}
                \label{2012-incomputability-proofs-table:greechie}
                \small
                $$\begin{array}{|c|c|c|c|c|c|c|c|c|c|c|c|c|c|}
                        \hline
                        v & 1, 2        & 3, 4          & 5, 6      & 7, 8       & 9, 10      & 11, 12      & 13        & 14, 15      & 16, 17      & 18, 19      & 20, 21      & 22, 23      & 24      \\
                        \hline
                        1 & 1~0~0       & 1~0~0         & \bf 2~1~1 & 2~1~1        & \bf 2~0~1   & 2~0~1       &           & 1~1~0       & \bf 1~1~1   & 1~1~1       & \bf 1~0~1   & 1~0~1       &         \\
                        0 & \bf 0~1~0   & \bf 0~1~1     & 1~\m1~\m1 & \bf 1~0~\m2  & 1~0~\m2     & \bf 1~1~\m2 &           & \bf 1~\m1~0 & 1~\m1~0     & \bf 1~0~\m1 & 1~0~\m1     & \bf 1~1~\m1 &         \\
                        0 & \bf 0~0~1   & \bf 0~1~\m1   & 0~1~\m1   & 2~\m5~1      & 0~1~0       & 1~\m5~\m2   &           & 0~0~1       & 1~1~\m2     & 1~\m2~1     & 0~1~0       & 1~\m2~\m1   & 1~1~\m1 \\
                          &             &               &           &              &             &             &           &             &             &             &             &             & 1~\m1~0 \\
                        1 & 3~2~1       & 3~2~1         & \bf 3~2~0 & 3~2~0        & \bf 3~1~\m1 & 3~1~\m1     & \bf 1~1~0 & 1~1~0       & \bf 2~1~\m1 & 2~1~\m1     & \bf 2~0~\m1 & 2~0~\m1     & 1~1~2   \\
                        0 & \bf 2~\m3~0 & \bf 1~\m1~\m1 & 2~\m3~0   & \bf 2~\m3~3  & 2~\m3~3     & \bf 1~\m1~2 & 1~\m1~2   & \bf 1~\m1~1 & 1~\m1~1     & \bf 1~0~2   & 1~0~2       & \bf 1~1~2   &         \\
                        0 & 3~2~\m{13}  & 1~\m4~5       & 0~0~1     & 6~\m9~\m{13} & 0~1~1       & 1~\m7~\m4   & 1~\m1~\m1 & 1~\m1~\m2   & 0~1~1       & 2~\m5~\m1   & 0~1~0       & 1~\m5~2     &         \\
                        \hline
                \end{array}$$
                \normalsize
        \end{table}

        Now let $\mathcal{C} = \{ C_1, C_2, \dots, C_{24} \}$ and $\mathcal{O} = \bigcup_{i = 1}^{24} C_i$.
        Suppose there exists an admissible assignment function $v$ under which $\mathcal{O}$ is non-contextual and $v(P_a, C_1) = v(P_b, C_2) = 1$.
        By continual application of the admissibility requirements, one can show that $v$ assigns certain values to all the observables in Table~\ref{2012-incomputability-proofs-table:greechie}.
        This argument proceeds through the table from left to right, where the value assigned to each observable is noted in the leftmost column.
        For example, in the first step we conclude that $v(P_\psi, C_1) = v(P_\phi, C_1) = 0$.
        An observable whose value is determined by the others in the column is marked in bold, provided that the value given will be used later on.
        This argument is also illustrated in Fig.~\ref{2012-incomputability-proofs-fig:greechie}.
        We eventually obtain a contradiction, namely that $v(o, C_{24}) = 0$ for all $o \in C_{24}$ (the dotted line in Fig.~\ref{2012-incomputability-proofs-fig:greechie}).
        Therefore there does not exist such admissible assignment function $v$.

        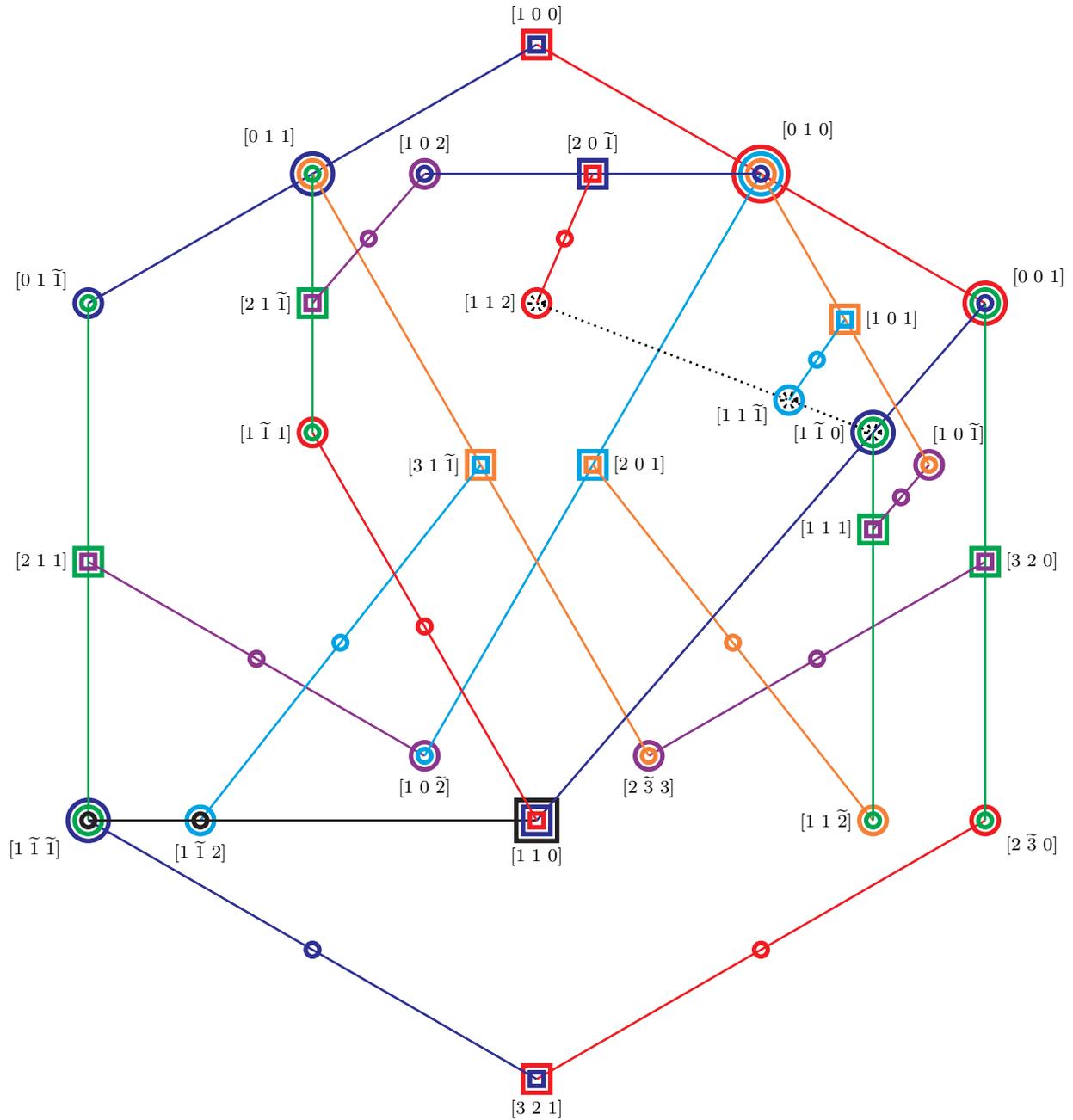
\begin{figure}
        \begin{center}
        \begin{tikzpicture}
        \tikzstyle{every path}=[line width=1pt]
        \tikzstyle{every node}=[draw,line width=2pt,inner sep=0]
        \tikzstyle{Hidden}=[draw opacity=0]

        \tikzstyle{y1}=[rectangle,minimum size=6]
        \tikzstyle{y2}=[rectangle,minimum size=12]
        \tikzstyle{y3}=[rectangle,minimum size=18]
        \tikzstyle{y4}=[rectangle,minimum size=24]

        \tikzstyle{n1}=[circle,minimum size=6]
        \tikzstyle{n2}=[circle,minimum size=12]
        \tikzstyle{n3}=[circle,minimum size=18]
        \tikzstyle{n4}=[circle,minimum size=24]

        \tikzstyle{l1}=[draw=none,rectangle,minimum size=8]
        \tikzstyle{l2}=[draw=none,rectangle,minimum size=14]
        \tikzstyle{l3}=[draw=none,rectangle,minimum size=20]
        \tikzstyle{l4}=[draw=none,rectangle,minimum size=26]

        \draw[Red] (90:8) -- (30:8)
                coordinate[y2,at start] (1 0 0)
                coordinate[n4,midway] (0 1 0)
                coordinate[n3,at end] (0 0 1);

        \draw[Blue] (1 0 0.center) -- (150:8)
                coordinate[y1,at start] (1 0 0)
                coordinate[n3,midway] (0 1 1)
                coordinate[n2,at end] (0 1 -1);

        \draw[Red] (270:8) -- (330:8)
                coordinate[y2,at start] (3 2 1)
                coordinate[n1,midway] (3 2 -13)
                coordinate[n2,at end] (2 -3 0);

        \draw[Blue] (3 2 1.center) -- (210:8)
                coordinate[y1,at start] (3 2 1)
                coordinate[n1,midway] (1 -4 5)
                coordinate[n3,at end] (1 -1 -1);

        \draw[Green] (1 -1 -1.center) -- (0 1 -1.center)
                coordinate[n2,at start] (1 -1 -1)
                coordinate[y2,midway] (2 1 1)
                coordinate[n1,at end] (0 1 -1);

        \draw[Green] (2 -3 0.center) -- (0 0 1.center)
                coordinate[n1,at start] (2 -3 0)
                coordinate[y2,midway] (3 2 0)
                coordinate[n2,at end] (0 0 1);

        \draw[Hidden] (2 1 1.center) -- (3 2 -13.center)
                coordinate[midway] (1 0 -2);

        \draw[Hidden] (3 2 0.center) -- (1 -4 5.center)
                coordinate[midway] (2 -3 3);

        \draw[Fuchsia] (2 1 1.center) -- (1 0 -2.center)
                coordinate[y1,at start] (2 1 1)
                coordinate[n1,midway] (2 -5 1)
                coordinate[n2,at end] (1 0 -2);

        \draw[Fuchsia] (3 2 0.center) -- (2 -3 3.center)
                coordinate[y1,at start] (3 2 0)
                coordinate[n1,midway] (6 -9 -13)
                coordinate[n2,at end] (2 -3 3);

        \draw[Cerulean] (0 1 0.center) -- (1 0 -2.center)
                coordinate[n3,at start] (0 1 0)
                coordinate[y2,midway] (2 0 1)
                coordinate[n1,at end] (1 0 -2);

        \draw[Orange] (0 1 1.center) -- (2 -3 3.center)
                coordinate[n2,at start] (0 1 1)
                coordinate[y2,midway] (3 1 -1)
                coordinate[n1,at end] (2 -3 3);

        \draw[Hidden] (1 -1 -1.center) -- (2 -3 0.center)
                coordinate[very near start] (1 -1 2)
                coordinate[midway] (1 1 0)
                coordinate[very near end] (1 1 -2);

        \draw[Orange] (2 0 1.center) -- (1 1 -2.center)
                coordinate[y1,at start] (2 0 1)
                coordinate[n1,midway] (1 -5 -2)
                coordinate[n2,at end] (1 1 -2);

        \draw[Cerulean] (3 1 -1.center) -- (1 -1 2.center)
                coordinate[y1,at start] (3 1 -1)
                coordinate[n1,midway] (1 -7 -4)
                coordinate[n2,at end] (1 -1 2);

        \draw[Black] (1 -1 -1.center) -- (1 1 0.center)
                coordinate[n1,at start] (1 -1 -1)
                coordinate[n1,near start] (1 -1 2)
                coordinate[y3,at end] (1 1 0);

        \draw[Blue] (1 1 0.center) -- (0 0 1.center)
                coordinate[y2,at start] (1 1 0)
                coordinate[n3,near end] (1 -1 0)
                coordinate[n1,at end] (0 0 1);

        \draw[Hidden] (0 1 -1.center) -- (2 -3 0.center)
                coordinate[near start] (1 -1 1);

        \draw[Red] (1 1 0.center) -- (1 -1 1.center)
                coordinate[y1,at start] (1 1 0)
                coordinate[n1,midway] (1 -1 -2)
                coordinate[n2,at end] (1 -1 1);

        \draw[Green] (1 -1 0.center) -- (1 1 -2.center)
                coordinate[n2,at start] (1 -1 0)
                coordinate[y2,near start] (1 1 1)
                coordinate[n1,at end] (1 1 -2);

        \draw[Green] (1 -1 1.center) -- (0 1 1.center)
                coordinate[n1,at start] (1 -1 1)
                coordinate[y2,midway] (2 1 -1)
                coordinate[n1,at end] (0 1 1);

        \draw[Hidden] (0 1 0.center) -- (3 2 0.center)
                coordinate[near end] (1 0 -1);

        \draw[Hidden] (0 1 0.center) -- (0 1 1.center)
                coordinate[near end] (1 0 2);

        \draw[Fuchsia] (1 1 1.center) -- (1 0 -1.center)
                coordinate[y1,at start] (1 1 1)
                coordinate[n1,midway] (1 -2 1)
                coordinate[n2,at end] (1 0 -1);

        \draw[Fuchsia] (2 1 -1.center) -- (1 0 2.center)
                coordinate[y1,at start] (2 1 -1)
                coordinate[n1,midway] (2 -5 -1)
                coordinate[n2,at end] (1 0 2);

        \draw[Orange] (1 0 -1.center) -- (0 1 0.center)
                coordinate[n1,at start] (1 0 -1)
                coordinate[y2,midway] (1 0 1)
                coordinate[n2,at end] (0 1 0);

        \draw[Blue] (1 0 2.center) -- (0 1 0.center)
                coordinate[n1,at start] (1 0 2)
                coordinate[y2,midway] (2 0 -1)
                coordinate[n1,at end] (0 1 0);

        \draw[Hidden] (1 0 0.center) -- (3 2 1.center)
                coordinate[near start] (1 1 2);

        \draw[Hidden] (1 1 2.center) -- (1 -1 0.center)
                coordinate[near end] (1 1 -1);

        \draw[Cerulean] (1 0 1.center) -- (1 1 -1.center)
                coordinate[y1,at start] (1 0 1)
                coordinate[n1,midway] (1 -2 -1)
                coordinate[n2,at end] (1 1 -1);

        \draw[Red] (2 0 -1.center) -- (1 1 2.center)
                coordinate[y1,at start] (2 0 -1)
                coordinate[n1,midway] (1 -5 2)
                coordinate[n2,at end] (1 1 2);

        \draw[dotted] (1 1 2.center) -- (1 -1 0.center)
                coordinate[n1,at start] (1 1 2)
                coordinate[n1,near end] (1 1 -1)
                coordinate[n1,at end] (1 -1 0);

        \coordinate[l2,label=90:\scriptsize\textcolor{black}{$[{1~0~0}]$}] (1 0 0) at (1 0 0.center);
        \coordinate[l4,label=60:\scriptsize\textcolor{black}{$[{0~1~0}]$}] (0 1 0) at (0 1 0.center);
        \coordinate[l3,label=30:\scriptsize\textcolor{black}{$[{0~0~1}]$}] (0 0 1) at (0 0 1.center);

        \coordinate[l3,label=120:\scriptsize\textcolor{black}{$[{0~1~1}]$}] (0 1 1) at (0 1 1.center);
        \coordinate[l2,label=150:\scriptsize\textcolor{black}{$[{0~1~\m1}]$}] (0 1 -1) at (0 1 -1.center);

        \coordinate[l2,label=270:\scriptsize\textcolor{black}{$[{3~2~1}]$}] (3 2 1) at (3 2 1.center);
        \coordinate[l2,label=330:\scriptsize\textcolor{black}{$[{2~\m3~0}]$}] (2 -3 0) at (2 -3 0.center);
        \coordinate[l3,label=210:\scriptsize\textcolor{black}{$[{1~\m1~\m1}]$}] (1 -1 -1) at (1 -1 -1.center);

        \coordinate[l2,label=180:\scriptsize\textcolor{black}{$[{2~1~1}]$}] (2 1 1) at (2 1 1.center);
        \coordinate[l2,label=0:\scriptsize\textcolor{black}{$[{3~2~0}]$}] (3 2 0) at (3 2 0.center);

        \coordinate[l2,label=270:\scriptsize\textcolor{black}{$[{1~0~\m2}]$}] (1 0 -2) at (1 0 -2.center);
        \coordinate[l2,label=270:\scriptsize\textcolor{black}{$[{2~\m3~3}]$}] (2 -3 3) at (2 -3 3.center);

        \coordinate[l2,label=0:\scriptsize\textcolor{black}{$[{2~0~1}]$}] (2 0 1) at (2 0 1.center);
        \coordinate[l2,label=180:\scriptsize\textcolor{black}{$[{3~1~\m1}]$}] (3 1 -1) at (3 1 -1.center);

        \coordinate[l2,label=180:\scriptsize\textcolor{black}{$[{1~1~\m2}]$}] (1 1 -2) at (1 1 -2.center);
        \coordinate[l2,label=270:\scriptsize\textcolor{black}{$[{1~\m1~2}]$}] (1 -1 2) at (1 -1 2.center);

        \coordinate[l3,label=270:\scriptsize\textcolor{black}{$[{1~1~0}]$}] (1 1 0) at (1 1 0.center);

        \coordinate[l3,label=180:\scriptsize\textcolor{black}{$[{1~\m1~0}]$}] (1 -1 0) at (1 -1 0.center);
        \coordinate[l2,label=180:\scriptsize\textcolor{black}{$[{1~\m1~1}]$}] (1 -1 1) at (1 -1 1.center);

        \coordinate[l2,label=180:\scriptsize\textcolor{black}{$[{1~1~1}]$}] (1 1 1) at (1 1 1.center);
        \coordinate[l2,label=180:\scriptsize\textcolor{black}{$[{2~1~\m1}]$}] (2 1 -1) at (2 1 -1.center);

        \coordinate[l2,label=87.5:\scriptsize\textcolor{black}{$[{1~0~\m1}]$}] (1 0 -1) at (1 0 -1.center);
        \coordinate[l2,label=90:\scriptsize\textcolor{black}{$[{1~0~2}]$}] (1 0 2) at (1 0 2.center);

        \coordinate[l2,label=0:\scriptsize\textcolor{black}{$[{1~0~1}]$}] (1 0 1) at (1 0 1.center);
        \coordinate[l2,label=90:\scriptsize\textcolor{black}{$[{2~0~\m1}]$}] (2 0 -1) at (2 0 -1.center);

        \coordinate[l2,label=185:\scriptsize\textcolor{black}{$[{1~1~\m1}]$}] (1 1 -1) at (1 1 -1.center);
        \coordinate[l2,label=180:\scriptsize\textcolor{black}{$[{1~1~2}]$}] (1 1 2) at (1 1 2.center);
        \end{tikzpicture}
        \end{center}
        \caption{(Color online) Greechie orthogonality diagram with an overlaid value assignment that can be used to visualise Table~\ref{2012-incomputability-proofs-table:greechie}.
                 The circles and squares represent observables that will be given the values $0$ and $1$ respectively.
                 They are joined by smooth lines which correspond to contexts, i.e.\ complete sets of compatible observables.}
        \label{2012-incomputability-proofs-fig:greechie}
        \end{figure}

        We now show that if $0<|\iprod{a}{b}|<\frac{3}{\sqrt{14}}\raisebox{.8mm}{,}$ and $P_a$ and $P_b$ both have the value 1, then there is a third observable $P_c$ which must also have the value 1 and satisfies $|\iprod{a}{c}|=\frac{3}{\sqrt{14}}\raisebox{.8mm}{.}$
        The above proof then applies to again show no admissible $v$ exists satisfying the requirements.

        By scaling $\ket{b}$ by a phase factor if necessary, we may assume that $\iprod{a}{b} \in \mathbb{R}$.
        Let $p = \iprod{a}{b}$ and $q = \sqrt{1 - p^2}$.
        Then $(\ket{b} - \ket{a} p) \frac{1}{q}$ is a unit vector orthogonal to $\ket{a}$.
        Taking a cross product, the set $\{ \ket{a}, (\ket{b} - \ket{a} p) \frac{1}{q}\raisebox{.8mm}{,} \ket{a} \times (\ket{b} - \ket{a} p) \frac{1}{q} \}$ forms an orthonormal basis for $\mathbb{C}^3$.
        Relative to this basis, $\ket{a} \equiv (1, 0, 0)$ and $\ket{b} \equiv (p, q, 0)$.
        Set $x = \frac{3}{\sqrt{14}}\raisebox{.8mm}{,}$ so that $p^2 < x^2$.
        Then $$\frac{p^2 (1 - x^2)}{q^2 x^2} = \frac{p^2 - p^2 x^2}{q^2 x^2} < \frac{x^2 - p^2 x^2}{q^2 x^2} = \frac{(1 - p^2) x^2}{q^2 x^2} = 1.$$
        Now set $y = \frac{p (1 - x^2)}{q x}\raisebox{.7mm}{,}$ so that $y^2 = \frac{p^2 (1 - x^2)}{q^2 x^2} (1 - x^2) < 1 - x^2$.
        Then we can set $z = \sqrt{1 - x^2 - y^2} \in \mathbb{R}$.
        This choice of $z$ makes $\ket{c} \equiv (x, y, z)$ a unit vector in $\mathbb{R}^3$.
        Taking cross products, we define
        \begin{align*}
                \ket{\alpha} &= \ket{a} \times \ket{c} \equiv (1, 0, 0) \times (x, y, z) = (0, -z, y), \\
                \ket{\beta} &= \ket{b} \times \ket{c} \equiv (p, q, 0) \times (x, y, z) = (q z, -p z, p y - q x),
        \end{align*}
        so that $\iprod{\alpha}{\beta} = (0, -z, y) \cdot (q z, -p z, p y - q x) = p z^2 + p y^2 - q x y = p (z^2 + y^2) - p (1 - x^2) = 0.$
        Therefore $\{ \ket{\alpha}, \ket{\beta}, \ket{c} \}$ is an orthogonal basis for $\mathbb{C}^3$.
        This implies that the projection observables $P_\alpha$, $P_\beta$, and $P_c$
        associated with the subspaces of $\mathbb{C}^3$ spanned by $\ket{\alpha}$, $\ket{\beta}$ and $\ket{c}$ are mutually compatible; that is,  $C_{25} = \{ P_\alpha, P_\beta, P_c \}$ is a context.
        Moreover, $P_\alpha$ is compatible with $P_a$ because $\iprod{\alpha}{a} = 0$.
        Likewise, $P_\beta$ is compatible with $P_b$.
        Hence there exist contexts $C_{26}$ and $C_{27}$ such that $P_\alpha, P_a \in C_{27}$ and $P_\beta, P_b \in C_{27}$.

        Define unit vectors $\ket{\psi} \equiv (0, 2 y - z, y + 2 z) \frac{\sqrt{14}}{5}$ and $\ket{\phi} \equiv (0, y + 2 z, z - 2 y) \frac{\sqrt{14}}{5}\raisebox{.4mm}{.}$
        Then it is easily checked that $\{ \ket{a}, \ket{\psi}, \ket{\phi} \}$ is an orthonormal basis for $\mathbb{C}^3$.
        Note that
        $$(\ket{a} 3 + \ket{\psi} 2 + \ket{\phi}) \tfrac{1}{\sqrt{14}} \equiv (\tfrac{3}{\sqrt{14}}\raisebox{.8mm}{,} (4 y - 2 z + y + 2 z) \tfrac{1}{5}\raisebox{.8mm}{,} (2 y + 4 z + z - 2 y) \tfrac{1}{5}) = (x, y, z) \equiv \ket{c},$$
         so $\ket{c} \equiv (3, 2, 1) \frac{1}{\sqrt{14}}$ relative to the basis $\{ \ket{a}, \ket{\psi}, \ket{\phi} \}$.

        Now let $\mathcal{C} = \{ C_1, C_2, \dots, C_{27} \}$ and $\mathcal{O} = \bigcup_{i = 1}^{27} C_i$.
        Suppose there exists an admissible assignment function $v$ under which $\mathcal{O}$ is non-contextual and $v(P_a, C_{26}) = v(P_b, C_{27}) = 1$.
        Since $v$ is admissible, it follows that $v(P_\alpha, C_{26}) = v(P_\beta, C_{27}) = 0$.
        Therefore $v(P_\alpha, C_{25}) = v(P_\beta, C_{25}) = 0$, so by admissibility $v(P_c, C_{25}) = 1$.
        This deduction is illustrated in Fig.~\ref{2012-incomputability-proofs-fig:greechie-mini}.
        However, by interpreting the observables in Table~\ref{2012-incomputability-proofs-table:greechie} as being defined relative to the basis $\{\ket{a},\ket{\psi},\ket{\phi}\}$, it is immediately clear that again no such admissible function $v$ exists.
\end{proof}

        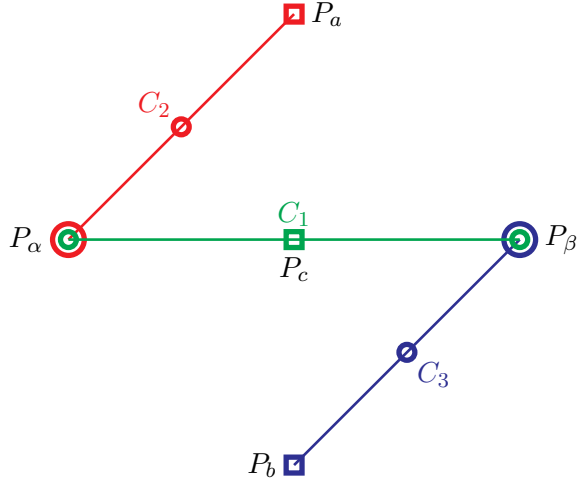
\begin{figure}[t]
        \begin{center}
        \begin{tikzpicture}
        \tikzstyle{every path}=[line width=1pt]
        \tikzstyle{every node}=[draw,line width=2pt,inner sep=0]

        \tikzstyle{y1}=[rectangle,minimum size=6]

        \tikzstyle{n1}=[circle,minimum size=6]
        \tikzstyle{n2}=[circle,minimum size=12]

        \tikzstyle{l1}=[draw=none,rectangle,minimum size=9]
        \tikzstyle{l2}=[draw=none,rectangle,minimum size=15]

        \draw[Red] (90:3) -- (180:3)
                coordinate[y1,at start] (a)
                coordinate[n1,midway,label=135:$C_2$] (0 y z)
                coordinate[n2,at end] (alpha);

        \draw[Blue] (270:3) -- (0:3)
                coordinate[y1,at start] (b)
                coordinate[n1,midway,label=315:$C_3$] (q -p ?)
                coordinate[n2,at end] (beta);

        \draw[Green] (alpha.center) -- (beta.center)
                coordinate[n1,at start] (alpha)
                coordinate[y1,midway,label=90:$C_1$] (c)
                coordinate[n1,at end] (beta);

        \coordinate[l1,label=0:$P_a$] (a) at (a.center);
        \coordinate[l1,label=180:$P_b$] (b) at (b.center);

        \coordinate[l2,label=180:$P_\alpha$] (alpha) at (alpha.center);
        \coordinate[l2,label=0:$P_\beta$] (beta) at (beta.center);

        \coordinate[l1,label=270:$P_c$] (c) at (c.center);
        \end{tikzpicture}
        \end{center}
        \caption{(Color online) Greechie orthogonality diagram with an overlaid value assignment that illustrates the relationship between the contexts $C_1,$ $C_2$ and $C_3$ in Theorem~\ref{thm:twonotvaluedefinite}.
                 The circles and squares represent observables that will be given the values $0$ and $1$ respectively.
                 They are joined by smooth lines which represent contexts.}
        \label{2012-incomputability-proofs-fig:greechie-mini}
        \end{figure}

\begin{Corollary}
\label{cor:twonotvaluedefinite}
        Let $\ket{a}, \ket{b}\in \mathbb{C}^3$ be unit vectors such that $\sqrt{\frac{5}{14}} \le |\iprod{a}{b}| \le \frac{3}{\sqrt{14}}\raisebox{.8mm}{.}$
        Then there exists a set of projection observables $\mathcal{O}$ containing $P_a$ and $P_b$, and a set of contexts $\mathcal{C}$ over $\mathcal{O}$, such that there is no admissible assignment function under which $\mathcal{O}$ is non-contextual, $P_a$ has the value 1 and $P_b$ is value definite.
\end{Corollary}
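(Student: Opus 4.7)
The plan is to reduce to Theorem~\ref{thm:twonotvaluedefinite} by a case split on the definite value that $v$ would assign to $P_b$. Suppose, for contradiction, that $v$ is admissible, non-contextual on $\mathcal{O}$, and satisfies $v(P_a,\cdot)=1$ with $P_b$ value definite. Non-contextuality fixes a single value for $P_b$ across all of its contexts, either $1$ or $0$. The case $v(P_b,\cdot)=1$ is handled directly by Theorem~\ref{thm:twonotvaluedefinite}, since the hypothesis $\sqrt{5/14}\le|\iprod{a}{b}|\le 3/\sqrt{14}$ lies inside the interval $(0,3/\sqrt{14}]$ covered there (and the contexts required by that theorem are placed into $\mathcal{C}$).

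The interesting case is $v(P_b,\cdot)=0$. Here I would manufacture a third unit vector $\ket{c}$ whose projector admissibility must assign the value $1$, and then apply Theorem~\ref{thm:twonotvaluedefinite} to the pair $(P_a,P_c)$. In $\mathbb{C}^3$ the orthogonal complement of the span of $\ket{a},\ket{b}$ is one-dimensional (since $|\iprod{a}{b}|<1$), so I pick a unit $\ket{d}$ in it and a unit $\ket{c}$ orthogonal to both $\ket{b}$ and $\ket{d}$. Augment $\mathcal{O}$ and $\mathcal{C}$ to include the contexts $\{P_a,P_d,P_e\}$ (with $\ket{e}\perp\ket{a},\ket{d}$) and $\{P_b,P_c,P_d\}$, together with all contexts supplied by Theorem~\ref{thm:twonotvaluedefinite} applied to $(P_a,P_b)$ and to $(P_a,P_c)$. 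The first admissibility clause combined with $v(P_a,\cdot)=1$ forces $v(P_d,\cdot)=0$ in $\{P_a,P_d,P_e\}$; non-contextuality then propagates $v(P_d,\cdot)=0$ into $\{P_b,P_c,P_d\}$. The second admissibility clause, applied to that context with $v(P_b,\cdot)=v(P_d,\cdot)=0$, forces $v(P_c,\cdot)=1$, and non-contextuality propagates this value of $P_c$ to every context containing it. Theorem~\ref{thm:twonotvaluedefinite} applied to $(P_a,P_c)$ then yields a contradiction, provided $0<|\iprod{a}{c}|\le 3/\sqrt{14}$.

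Verifying that inner-product bound is the key calculation, and is the only place the lower hypothesis $\sqrt{5/14}$ is used. Choosing coordinates so that $\ket{a}\equiv(1,0,0)$ and (after a phase adjustment of $\ket{b}$) $\ket{b}\equiv(p,q,0)$ with $p=\iprod{a}{b}\in\mathbb{R}$ and $q=\sqrt{1-p^2}$, one obtains $\ket{d}\equiv(0,0,1)$ and $\ket{c}\equiv(-q,p,0)$, giving $|\iprod{a}{c}|=q=\sqrt{1-p^2}$. The hypothesis $p^2\ge 5/14$ then gives $|\iprod{a}{c}|\le 3/\sqrt{14}$, while $p^2\le 9/14<1$ gives $|\iprod{a}{c}|>0$. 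Thus the bounds $\sqrt{5/14}$ and $3/\sqrt{14}$ in the hypothesis are exactly dual under $p\mapsto\sqrt{1-p^2}$, so both invocations of Theorem~\ref{thm:twonotvaluedefinite} fall inside its valid range. I do not expect a genuine obstacle; the hard part is really just bookkeeping in merging the context sets of the two invocations. The only subtle step is the admissibility propagation that forces $v(P_c,\cdot)=1$ from $v(P_a,\cdot)=1$ and $v(P_b,\cdot)=0$, and it is a direct application of the two admissibility clauses.
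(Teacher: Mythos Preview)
Your proposal is correct and essentially identical to the paper's proof: the paper also chooses coordinates with $\ket{a}\equiv(1,0,0)$, $\ket{b}\equiv(p,q,0)$, takes the common orthogonal $\ket{\beta}\equiv(0,0,1)$ (your $\ket{d}$) and $\ket{c}\equiv(q,-p,0)$, uses the two contexts $\{P_a,P_\alpha,P_\beta\}$ and $\{P_b,P_c,P_\beta\}$ to force $v(P_c)=1$ via admissibility, and then invokes Theorem~\ref{thm:twonotvaluedefinite} on both pairs $(P_a,P_b)$ and $(P_a,P_c)$, merging the resulting observable and context sets. The only cosmetic difference is that the paper rules out $v(P_b)=1$ first and then treats the remaining case, whereas you phrase it as an explicit case split.
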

\begin{proof}
        Again scale $\ket{b}$ so that $\iprod{a}{b} \in \mathbb{R}$.
        Let $p = \iprod{a}{b}$ and $q = \sqrt{1 - p^2}$.
        As above we construct an orthonormal basis in which $\ket{a} \equiv (1,0,0)$ and $\ket{b} \equiv (p,q,0)$.
        Define $\ket{\alpha} \equiv (0,1,0)$, $\ket{\beta} \equiv (0,0,1)$ and $\ket{c} \equiv (q,-p,0)$.
        Then $\{ \ket{a}, \ket{\alpha}, \ket{\beta} \}$ and $\{ \ket{b}, \ket{c}, \ket{\beta} \}$ are orthonormal bases for $\mathbb{C}^3$, so we can define the contexts $C_1 = \{ P_a, P_\alpha, P_\beta \}$ and $C_2 = \{ P_b, P_c, P_\beta \}$.
        Note that $p^2 \ge \frac{5}{14}$ and hence $$\iprod{a}{c} = q = \sqrt{1 - p^2} \le \sqrt{1 - \tfrac{5}{14}} = \tfrac{3}{\sqrt{14}}\raisebox{.8mm}{.}$$
        From Theorem~\ref{thm:twonotvaluedefinite} it follows that there are sets of observables $\mathcal{O}_b$, $\mathcal{O}_c$ and contexts $\mathcal{C}_b$, $\mathcal{C}_c$ such that there is no admissible assignment function under which $\mathcal{O}_b$ ($\mathcal{O}_c$) is non-contextual and $P_a, P_b$ ($P_a, P_c$) have the value $1$.
        We combine these sets to give $\mathcal{O} = \mathcal{O}_b \cup \mathcal{O}_c \cup \{ P_\alpha, P_\beta \}$ and $\mathcal{C} = \mathcal{C}_b \cup \mathcal{C}_c \cup \{ C_1, C_2 \}$.
        Suppose there exists an admissible assignment function $v$ under which $\mathcal{O}$ is non-contextual, $v(P_a, C_1) = 1$ and $P_b$ is value definite.
        Then $v(P_b, C_2) \neq 1$ by the definition of $\mathcal{O}_b$, so $v(P_b, C_2) = 0$.
        Since $v(P_a, C_1) = 1$ and $v$ is admissible, $v(P_\beta, C_1) = 0$ and hence $v(P_\beta, C_2) = 0$ as well.
        So by admissibility $v(P_c, C_2) = 1$, which is impossible by the definition of $\mathcal{O}_c$.
        Therefore there does not exist such a function $v$.
\end{proof}

The difference between the above result and the Kochen-Specker theorem is subtle but critical.
The Kochen-Specker theorem, under the assumption of non-contextuality, only finds a contradiction with the hypothesis that \emph{all} observables are value definite---it does not allow any specific observable to be proven value indefinite.
Corollary~\ref{cor:twonotvaluedefinite}, however, allows just this---{\em specific value indefinite observables can be identified}.
%This is more specific than the Kochen-Specker theorem, which only shows that not all of the observables in $\mathcal{O}$ can simultaneously be both non-contextual and value definite.
%To obtain such a result we drop the assumption of value definiteness, which the Kochen-Specker Theorem finds to be in contradiction with non-contextuality, and instead only assume that $v$ is admissible.
%In other words, we do not assume an observable is value definite unless admissibility requires it.
While we delay the physical interpretation of this result until the following section, we mention that it applies to measurements of an observable on a physical system in an eigenstate of a different observable.

\section{Physical Interpretation}
\label{sec:interpretation}

%The results of the previous section are presented in a general framework with only a superficial connection to physical reality.

In order to make operational use  of the results of the previous section
we connect the formal entities with measurement outcomes.
%To relate to incomputability, we have to make explicit the connection between this formal framework and that of quantum mechanics.
In the process of doing this, we make explicit the assumptions that our results rely on.

\subsection{The role of measurement}

An inherent assumption in the attempt to attribute physical meaning to the Kochen-Specker theorem (as well as the other theorems we have proved), and one which we shall also make, is that measurement is actually a physically meaningful process.
In particular, we assume:\\

%\begin{Assumption}
        {\em Measurement assumption.} Measurement yields a physically meaningful and unique result.
%\end{Assumption}

This may seem rather self-evident, but it is not true of interpretations of quantum mechanics such as the many-worlds
interpretation, where measurement is just a process by which the apparatus or experimenter becomes entangled with the state being ``measured.''
In such an interpretation it does not make sense to talk about the unique ``result'' of a measurement, let alone any definite values which one may pre-associate with them.
\if01
{\color{blue}
Another scenario in which the measurement result is`meaningless' with regards to the physical object observed is
the \emph{context translation principle} \cite{svozil-2003-garda}, according to which,
in the case of a mismatch between preparation and measurement contexts,
the outcome is strongly influenced by the many degrees of
freedom of a macroscopic measurement apparatus, and therefore, as suggested by Bell \cite[p.~451]{bell-66}
``the result of an observation may reasonably depend $\ldots$ on the complete disposition of the apparatus''.
}

To establish the relationship between the quantum system of interest and the function $v$ assigning definite values in advance, we need to restrict ourselves to assignment functions which agree with quantum mechanics.
Specifically, definite values prescribed by the function should be just that; they must guarantee the result of a measurement. We thus have:\\

\begin{Assumption}
If, for a given physical system, an observable $o$ is value definite in a context $C\in\mathcal{C}$ under $v$, then the result of a measurement of $o$ in the context $C$ yields the value $v(o,C)$ with certainty.
Furthermore, the state of the system ``collapses'' into the corresponding eigenstate of $o$.
\end{Assumption}

Any assignment function not meeting this condition surely does not capture the notion of assigning definite values and is hence of little interest.
Of course, an assignment function which is defined nowhere meets this condition, but this complete indefiniteness does not fully capture our knowledge of a quantum system; we should at least be able to predict the outcomes of \emph{some} measurements.
We discuss this issue of when to assign definite values in Sec~\ref{sec:predImpliesVI}.
\fi

To establish the relationship between the quantum system of interest and the function $v$ assigning definite values in advance, we need to restrict ourselves to assignment functions which agree with quantum mechanics.
Specifically, definite values prescribed by the function should be just that; they must guarantee the result of a measurement.\\
%\begin{Definition}
        Let $v$ be a value assignment function. We say that $v$ is a \emph{faithful} representation of a realization $r_\psi$ of a state $\ket{\psi}$ if a measurement of observable $o$ in the context $C$ on the physical state $r_\psi$ yields the result $v(o,C)$ whenever $o$ has a definite value under $v$.
        %if $o$ is value definite under $v$.
%\end{Definition}
Usually, it is implicitly assumed that a value assignment function is faithful---if it is not then it has no real relation to the physical system that it is meant to model and is of little interest.
Nonetheless, since we intend to make all assumptions explicit here, we will make clear that we are referring to faithful assignment functions when necessary.
Of course, an assignment function which is defined nowhere meets this condition, but this complete indefiniteness does not fully capture our knowledge of a quantum system; we should at least be able to predict the outcomes of \emph{some} measurements.
We discuss this issue of when to assign definite values in Sec~\ref{sec:predImpliesVI}.

%\emph{From here on in, all assignment functions are assumed to be faithful.}

\subsection{Value indefiniteness}

The Kochen-Specker theorem leaves two possibilities: either we give up the idea that every observable should be simultaneously value definite, or we allow observables to be defined contextually. Of course, some combination of both options is also possible.
Here we opt to assume non-contextuality of observables for which the outcome is predetermined, and thus give up the historic notion of complete determinism (classical omniscience).

%{\color{blue}
This assumption might be in contradiction to that of  physicists who,
in the tradition of the realist Bell (see the oft-quoted text,~\cite{bell-66}),
%(albeit often self-proclaimed believers in variants of Bohr's `Copenhagen interpretation'),
tend to opt for contextuality.
% rather than insisting on the non-contextuality of observables.
%Nevertheless, in our viewpoint this is a desperate attempt to maintain realism by still allowing for the possibility
%of the \emph{pre-existence} or at least \emph{determinateness} of all potentially measurable observable properties
%even before the actual measurement.
% albeit the (potential and actual) outcome must, at least for some of these potential observables, also be determined by the measurement context.
%Pointedly stated, this might save a realistic omniscience and `contextual value definiteness'
The option for contextuality saves realistic omniscience and ``contextual value definiteness'' at the price of introducing a more general dependence of at least some potential observables on the measurement context.
In what follows we make no attempt to save realistic omniscience  and instead require the non-contextuality of any pre-determined properties.
%, thereby obtaining value indefiniteness.
%}

%\begin{Assumption}
        {\em Non-contextuality assumption.} The set of observables $\mathcal{O}$ is non-contextual.
%\end{Assumption}

While from the Kochen-Specker theorem and our discussion of strong-contexuality it is mathematically conceivable that only some observables are forced to be value indefinite, while others remain both non-contextual and value definite,
this would be a rather strange scenario due to the overall uniformity and symmetry of these arguments.
%this is a difficult stance to argue physically in favour of due to overall uniformity and symmetry.
Regardless, if we can guarantee that one observable $P_a$ is value definite, with the value 1 (e.g.\ by preparing the system in an eigenstate of $P_a$ with eigenvalue 1), Corollary~\ref{cor:twonotvaluedefinite} gives us some observables that must be value indefinite.

\subsection{Predictability implies value definiteness}
\label{sec:predImpliesVI}

A more subtle assumption relates to the question of when we should consider a physical observable to have a definite value associated with it, and the connection between these definite values and probability.
Einstein, Podolsky and Rosen (EPR), in their seminal paper on the EPR paradox as it is now known, said~\cite[p. 777]{epr}:
\begin{quote}
        If, without in any way disturbing a system, we can predict with certainty (i.e., with probability equal to unity)
the value of a physical quantity, then there exists an element of physical reality \footnote{An element of physical reality corresponds
to the notion of a definite value, possibly contextual,
as outlined in this paper.} [(e.p.r.)]  corresponding to this physical quantity.
\end{quote}

From the physicist's point of view, the ability to predict the value of an observable with certainty seems sufficient to posit the existence of a definite value associated with that observable. However, the identification that EPR make between certainty and probability one is less sound.
Mathematically, the statement is simply not true: for infinite  measure spaces probability zero events not only can, but must occur---every point has probability 0 under the Lebesgue measure. With a frequentist view of probability, the two notions cannot be united even for finite spaces. One can only say an event is certain if its complement is the empty set.

With the formalism of quantum mechanics entirely based on probability spaces, what then can we say about any definite values in physical reality?
A deterministic theory is based on a description of a state which is complete in that it specifies definite values for all observables.
The state in quantum theory, however, is given as a wave function, which in turn is determined by the operators
of which the system is an eigenstate. Quantum theory is thus based on the notion that a physical state is
``completely characterised by the wave function,''
which is an eigenstate of some operator and is determined for any context containing the said operator; as EPR note,
the ``physical quantity'' corresponding to that operator has ``with certainty'' the corresponding eigenvalue~\cite[p. 778]{epr}.
The theory then presents a probabilistic framework to express behavior in other contexts.
A reasonable assumption based on this principle is the following:

%\begin{quote}
{\em Eigenstate assumption.} Let $\ket{\psi}$ be a (normalised) quantum state and $v$ be a faithful assignment function. Then $v(P_\psi,C)=1$ and $v(P_\phi,C)=0$ for
%the projection observables $o=\oprod{\psi}{\psi}$ and $o'\perp o$, and
any context $C\in\mathcal{C}$ with $P_\psi,P_\phi\in C$.
%\end{quote}

While this is a reasonable condition under which to assign an initial set of definite values,
its use is restricted to contexts containing the ``preparation'' observable.
In order to extend this, we must more carefully formulate the notion of being able to predict the value of an observable with certainty.

%More formally, what does it mean to predict with certainty?

Let us consider a system which we prepare, measure, rinse and repeat ad infinitum.
Let $\seq{x}=x_1 x_2 \dots$ denote the infinite sequence produced by concatenating the outputs of these measurements.
Fix a set of observables $\mathcal{O}$ and contexts $\mathcal{C}$ and let $o_i,C_i$ denote the observable and corresponding context of the $i$th measurement.
We can predict with certainty the value of each measurement if there exists a computable function
$f : \mathbf{N}\times \mathcal{O}\times \mathcal{C} \to \{0,1\}$ such that, for every $i$, $f(i,o_i,C_i)=x_i$.
Why do we require that $f$ be computable?
Since we must with every measurement obtain a result, there is guaranteed to be some function giving $\seq{x}$ from the measurements, but if it is not computable then this function offers no method to predict the values.
Why do we formulate this for infinite sequences?
The notion of computability, and thus concrete predictability, only makes sense for infinite sequences; it is clear that any technique which allows prediction of every measurement with certainty must also do so when the measurements are continued ad infinitum.

The last assumption is the

%\begin{quote}
{\em Elements of physical reality (e.p.r.) assumption.} If
there exists a computable function $f : \mathbf{N}\times \mathcal{O}\times \mathcal{C} \to \{0,1\}$
such that for every $i$ $f(i,o_i,C_i)=x_i$, then there is a definite value associated with $o_i$ at each step
[i.e., $v_i(o_i,C_i)=f(i,o_i,C_i)$].
%\end{quote}

We note that the assumption above does not postulate the existence of an effective way to find or to compute
 the computable function $f$: such a function simply exists.
This is visible in classical hidden variable type theories such as   statistical mechanics for thermodynamics, where we can hardly claim to be able to even describe fully the momentum and position of each particle in a gas, but it is sufficient to know that we \emph{can} do so and that these hidden variables exist in the sense that they allow us, in principle,  to predict the outcome of any measurement in advance.
Furthermore, we follow EPR in noting that this is certainly only a sufficient condition for definite values to be present; it is by no means necessary.

%Finally, we note that EPR and strong value indefiniteness assumptions that we make are independent of each other.

\subsection{Connection to quantum theory}

The final step is to justify our requirement of the admissibility of the assignment function.

%We begin by stating the following well known fact about projection operators.\\
%\begin{Fact}\label{fact:completeness}
%       Let $\{\ket{\psi_1},\dots,\ket{\psi_n}\}$ be an orthonormal basis for $\mathbb{C}^n$. Then $\sum_{i=1}^n \oprod{\psi_i}{\psi_i}=\mathbf{1}$.\\
%\end{Fact}

We first note the following:
%\begin{Lemma}
%\label{lem:yes to no}
%Let $\ket{\psi}$ and $\ket{\phi}$ be normalised pure states, and suppose that $\iprod{\psi}{\phi} = 0$.
%Also suppose that $P_\psi=\oprod{\psi}{\psi}$ is given the value 1 in a non-contextual value assignment; i.e. $v_{NC}(P_\psi)=1$.
%Then $P_\phi\oprod{\phi}{\phi}$ must be given the value 0; i.e. $v_{NC}(P_\phi)=1$.
Let $C=\{P_1,\dots,P_n\}$ be a context of projection observables, $v$ a faithful assignment function and $v(P_1,C)=1$.
%\end{Lemma}
\if01
%\begin{proof}
Since $\oprod{\psi}{\psi}$ and $\oprod{\phi}{\phi}$ are co-measurable, if we measure them both, the system will collapse into an eigenstate of $\oprod{\psi}{\psi}$,
corresponding to the eigenvalue 1, which is orthogonal to $\ket{\phi}$.
Since this final state would also be an eigenvector of $\oprod{\phi}{\phi}$, Lemma~\ref{lem:spectrum} implies that it corresponds to 0.
Therefore this measurement of $\oprod{\phi}{\phi}$ is predetermined to report the value 0, so by non-contextuality any measurement of $\oprod{\phi}{\phi}$ will have the same result.
%\end{proof}
\fi
%\begin{proof}
Since $P_1$ and $P_i$ ($i\neq 1$) are compatible (physically co-measurable), if we measure them both, the system will collapse into the eigenstate of $P_1$
 corresponding to the eigenvalue 1.
Since this final state would also be an eigenstate of $P_i$,
 it follows from the fact that $\sum_{j=1}^n P_j=\mathbf{1}$ that this state corresponds to the eigenvalue 0 of $P_i$ and hence $v(P_i,C)=0$.
Hence we conclude that $v(P_i,C)=0$ for all $2\le i \le n$.
%\end{proof}
%\begin{Lemma}
%\label{lem:no to yes}
%Let $\mathcal{B} = \{ \ket{\psi}, \dots \}$ be an orthonormal basis for the (presumably finite) state space, and $\mathcal{P} = \{ \oprod{\phi}{\phi} : \ket{\phi} \in \mathcal{B} \}$ be the corresponding set of commuting projection observables.
%Suppose that each $P \in \mathcal{P} - \{ \oprod{\psi}{\psi} \}$ is given the value 0 in a non-contextual value assignment.
%Then $\oprod{\psi}{\psi}$ must be given the value 1.
By a similar argument, we see that if instead $v(P_i,C)=0$ for $2\le i \le n$ we must have $v(P_1,C)=1$.
%Let $C=\{o_1,\dots,o_n\}$ be a context of projection observables and $v$ a faithful assignment function.
% Suppose that for $2\le i \le n$ we have $v(o_i,C)=0$. Then we must have $v(o_1,C)=1$.
%\end{Lemma}
\if01
%\begin{proof}
From Lemma~\ref{lem:compatible}, we know that $\mathcal{P}$ is a collection of mutually co-measurable observables.
If we measure them all, the system will collapse into an eigenstate (with eigenvalue 0) of every operator in $\mathcal{P}$, which by Lemma~\ref{lem:spectrum} will be orthogonal to every state in $\mathcal{B} - \{\ket{\psi}\}$.
Let $\ket{\phi}$ be this state.
Then $\ket{\phi} = \sum_{\ket{\beta} \in \mathcal{B}} \ket{\beta} \iprod{\beta}{\phi} = \ket{\psi} \iprod{\psi}{\phi}$, which implies (by Lemma~\ref{lem:compatible} again) that the eigenvalue of $\ket{\phi}$ for $\oprod{\psi}{\psi}$ is 1.
Therefore this measurement of $\oprod{\psi}{\psi}$ (and any other one, by non-contextuality) will report the value 1.
%\end{proof}
\fi
%\begin{proof}
%Since $o_2$,\dots,$o_n$ are all compatible, if we measure them all the system will collapse into a simultaneous eigenstate of each of these operators corresponding, in every case, to the eigenvalue 0. Since this final state is also an eigenstate of $o_1$, it follows as in the previous Lemma that this state corresponds to the eigenvalue 1 of $o_1$ and hence $v(o_1,C)=1$.
%\end{proof}

%\begin{Theorem}
%       \label{thm:admissible}
        From these facts it follows directly that a faithful assignment function $v$ must be admissible,
%\end{Theorem}
%\begin{proof}
%       The proof follows directly from the previous two Lemmata.
%\end{proof}
thus justifying our definition of an admissible $v$.
Indeed, admissibility of $v$ is the direct generalisation of the ``sum rule'' used in proofs of the Kochen-Specker theorem~\cite{kochen1,Peres:1996fk} to the case where value definiteness is not assumed.
In our deduction of the requirement of admissibility we are particularly careful in using our assumptions to show that admissibility is required if simple relations of projection observables are to be satisfied.\\

As a consequence, we get the following useful form of Corollary~\ref{cor:twonotvaluedefinite} which we will utilize in the remainder of the paper.\\
\begin{Corollary}
        \label{col:VIContext}
        Let $\ket{\psi} \in \mathbb{C}^3$ be a quantum state describing a system.
        Also let $\ket{\phi} \in \mathbb{C}^3$ be any other state which satisfies $\sqrt{\frac{5}{14}} \le |\iprod{\psi}{\phi}| \le \frac{3}{\sqrt{14}}\raisebox{.8mm}{.}$
        Then, assuming non-contextuality, $P_\phi$ cannot be assigned a definite value by a faithful assignment function.
\end{Corollary}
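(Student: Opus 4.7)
The plan is to reduce this corollary to Corollary~\ref{cor:twonotvaluedefinite} by translating the physical hypotheses of the present subsection into the formal premises the earlier corollary requires. Identifying $\ket{a}\equiv \ket{\psi}$ and $\ket{b}\equiv \ket{\phi}$, the inner-product condition $\sqrt{5/14}\le |\iprod{\psi}{\phi}| \le 3/\sqrt{14}$ is exactly what Corollary~\ref{cor:twonotvaluedefinite} needs, so it supplies sets $\mathcal{O}$ and $\mathcal{C}$ containing $P_\psi$ and $P_\phi$ (with, in particular, a distinguished context $C_1\ni P_\psi$) such that no admissible, non-contextual assignment function on $\mathcal{O}$ can simultaneously satisfy $v(P_\psi,C_1)=1$ and have $P_\phi$ value definite.

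To use this, I would argue by contradiction: suppose some faithful $v$ assigns a definite value to $P_\phi$. I then assemble three facts established earlier in the section. First, the \emph{Eigenstate assumption} applied to the preparation $\ket{\psi}$ forces $v(P_\psi,C_1)=1$ (if the $\mathcal{C}$ produced by Corollary~\ref{cor:twonotvaluedefinite} does not already contain a context with $P_\psi$ one simply adjoins one, which does not alter the earlier conclusion). Second, the deduction just given in this subsection---that a faithful $v$ must be admissible---applies verbatim. Third, the \emph{Non-contextuality assumption} in force throughout the section provides the remaining hypothesis on $\mathcal{O}$.

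With these three properties of $v$ in hand, every hypothesis of Corollary~\ref{cor:twonotvaluedefinite} holds while $P_\phi$ is also value definite, a direct contradiction. Hence no such faithful $v$ can assign a definite value to $P_\phi$.

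The whole argument is essentially a dictionary translation from the formal framework to the physical picture; the real content sits in Corollary~\ref{cor:twonotvaluedefinite} and in the derivations of admissibility and of the Eigenstate assumption in this section. The only point requiring any care---hardly an obstacle---is bookkeeping: making sure that the single set $\mathcal{C}$ contains a context in which the Eigenstate assumption pins $v(P_\psi,\cdot)=1$ \textbf{and} the contexts needed to invoke Corollary~\ref{cor:twonotvaluedefinite}, which is handled by the trivial enlargement mentioned above.
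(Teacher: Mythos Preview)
Your argument is correct and matches the paper's own proof essentially line for line: invoke the Eigenstate assumption to get $v(P_\psi,\cdot)=1$, use the fact that faithful assignment functions are admissible, combine with the standing Non-contextuality assumption, and then apply Corollary~\ref{cor:twonotvaluedefinite}. Your extra bookkeeping remark about adjoining a context containing $P_\psi$ is unnecessary, since the $\mathcal{C}$ constructed in the proof of Corollary~\ref{cor:twonotvaluedefinite} already contains such a context ($C_1=\{P_a,P_\alpha,P_\beta\}$).
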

\begin{proof}
        From the Eigenstate assumption, $P_\psi$ must be assigned the value 1. By Corollary~\ref{cor:twonotvaluedefinite} and the requirement for a faithful assignment function to be admissible, it follows that $P_\phi$ must be value indefinite.
\end{proof}

%\section{Incomputability}
\section{A Random Number Generator}

From our assumptions of non-contextuality along with our physical assumptions in the preceding section, we arrived at the key result of Corollary~\ref{col:VIContext}, which allows us to identify particular observables which must be value indefinite.
This guarantee of indefiniteness, which both the Bell~\cite{bell-66} and Kochen-Specker theorems cannot yield, adds extra conviction to the widely accepted (but not proven) unpredictability of the result of quantum measurements.
Since quantum random number generators (QRNGs)~\cite{Kwon:09,10.1038/nature09008,stefanov-2000,svozil-2009-howto,Wayne-09,stipcevic045104,Ma:05} depend entirely on this, it seems clear we should make use of this extra certification in their design.
In this section we present such a design of a QRNG, and use Corollary~\ref{col:VIContext} to prove that such a device will produce strongly incomputable sequences of bits---a strong, explicit certification of the QRNG.

\subsection{Random number generator design}

%The experiment described above producing incomputable bits can be implemented by the setup shown in Fig.~\ref{fig:setup}.
The QRNG setup is shown in Fig.~\ref{fig:setup}.
Spin-1 particles are prepared in the $\textsf{S}_z=0$ state (thus, by the Eigenstate assumption, this operator has a definite value),
and then the $\textsf{S}_x$ operator is measured.
Since the preparation state is an eigenstate of the $\textsf{S}_x=0$ projector with eigenvalue 0,
this outcome has a definite value and cannot be obtained.
Thus, while the setup uses spin-1 particles, the outcomes are dichotomic and the $\textsf{S}_x=\pm 1$ outcomes can be assigned 0 and 1 respectively.
Furthermore, since $\iprod{\textsf{S}_z=0}{\textsf{S}_x=\pm 1}=1/\sqrt{2}$, it follows from Corollary~\ref{col:VIContext} that neither of the $\textsf{S}_x=\pm 1$ outcomes can have pre-assigned definite value.

While this design is very simple, it has the two key properties we need from such a QRNG: it produces bits certified by value indefiniteness, and it produces the bits 0 and 1 independently and with 50/50 probability.
%The binary sequence obtained by concatenating successive $S_x$ measurements thus satisfied the requirements to produce and incomputable sequence.

\begin{figure}

\begin{center}
\begin{tikzpicture}
\tikzstyle{c}=[circle,draw,fill=white]
\tikzstyle{r}=[rectangle,draw,fill=white,inner sep=7pt]

\node[r] (source) at (0,0) {spin-1 source};
\node[r] (z) at (4,0) {$\textsf{S}_z$ splitter};
\node[r] (x) at (8,0) {$\textsf{S}_x$ splitter};

\draw[->] (source) -- (z);

\draw[->] (z) edge node[above] {\scriptsize 1} (6.8,1);
\draw[->] (z) edge node[above] {\scriptsize 0} (x);
\draw[->] (z) edge node[above] {\scriptsize -1} (6.8,-1);

\node[c] (1) at (11.5,1) {\texttt{1}};
\node[c] (0) at (11.5,-1) {\texttt{0}};

\draw[->] (x) edge node[above] {\scriptsize 1} (1);
\draw[->,dotted] (x) edge node[above] {\scriptsize 0} (11.2,0);
\draw[->] (x) edge node[above] {\scriptsize -1} (0);
\end{tikzpicture}
\end{center}
\caption{Experimental setup of a configuration of quantum observables rendering random bits certified by quantum value indefiniteness.}
        \label{fig:setup}
\end{figure}
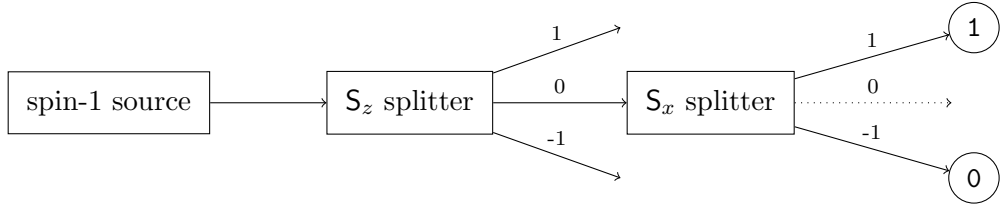

\subsection{Certification via value indefiniteness}

Consider the QRNG
%\footnote{We consider a QRNG, as proposed in \cite{}, which for example repeatedly prepares the system in an eigenstate of an observable $A$ and measures the result of an observable $B$ satisfying $[A,B]\neq 0$.}
described in the previous section, and let us consider that we run it repeatedly ``to infinity;''  that is, we use it repeatedly to generate bits and concatenate them together to produce, in the limit, the binary sequence $\seq{x}=x_1 x_2 \dots x_n \dots$.
Here we consider the sequence $\seq{x}$ produced in such a manner and show that, under our assumptions, it is guaranteed to be {\em incomputable}.
Note that we are using the measurement assumption here, since we must assume that $\seq{x}$ is actually produced (not that, for example, all infinite sequences are generated in different universes).

%{\color{cyan}
Before presenting our argument we note that
Martin-L\"of's theorem in algorithmic information theory~\cite{calude:02}
shows that {\em there are no  pure, true or perfect random sequences}:
there are patterns in every sequence,
a deterministic provable fact which is much stronger than the typical
highly probable results (facts true with probability one) proved in probability theory.
Because  we cannot speak about pure, true or perfect randomness
we have no option but to study degrees and symptoms of randomness:
some sequences are more random than others.
Uniform distribution within a sequence (Borel normality \cite{borel:09}) is a symptom of randomness:
however,  there exist
computable uniformly distributed sequences, (e.g.,  the Champernowne
 sequence~\cite{calude:02}),  which are far from being random in any meaningful way.
Unpredictability is another symptom;
(strong) incomputability is one mathematical way to express it.
Uniform distribution and unpredictability
are independent;
while the lack of uniform distribution can be easily mitigated by procedures
\`a la von Neumann~\cite{AbbottCalude10},
transforming a computable sequence  into an incomputable one is a much more difficult problem.
%}

Quantum randomness is usually qualified in terms of the probability distribution of the source.
This only allows for probabilistic claims about the outcomes of individual measurements.
For example, with probability one any sequence of quantum random bits is incomputable; such a statement is
 weaker than saying that the sequence is  provably incomputable.
Nevertheless, claims made in different articles,
even recent ones such as Refs.~\cite{10.1038/nature09008,Merali:2014aa}
or web sites \cite{idQuantique,ANU}, according to which
``perfect randomness can be obtained via quantum experiments,''
are only of this statistical nature.
Here we are able to prove the guaranteed  incomputability of quantum randomness;
however, due to
Martin-L\"of's theorem, even this result cannot be called ``perfect randomness.''

For the sake of contradiction let us assume that $\seq{x}$ as described above is computable.
Then, by definition, there must exist a Turing machine $T$ (and thus a computable function) that can be associated with $\seq{x}$ allowing us to predict with certainty every value $x_i$.
From the e.p.r. assumption, it follows that each observable $o_i$ is value definite and $v_i(o_i,C)=x_i$.
This contradicts the implications of Corollary~\ref{col:VIContext}.
Thus we conclude that $\seq{x}$ must be incomputable.

This proof can easily show the stronger claim: that $\seq{x}$ is {\em bi-immune};  that is, no infinite sub-sequence of $\seq{x}$ is computable.
This can easily by seen by the same argument: if there was a computable subsequence then we could assign definite values to the observables giving rise to this subsequence, contradicting our assumption of value indefiniteness everywhere.

%From the ternary sequence $\seq{x}$ we effectively compute the binary sequence $\seq{y}= y_{1}y_{2} \dots y_{m}\dots $ such that
%$\sum_{n=1}^{\infty} x_{n}  3^{-n}= \sum_{m=1}^{\infty} y_{n} 2^{-m}$, and conversely. If follows that the bi-immunity of $\seq{x}$
%implies bi-immunity of $\seq{y}$ too.

We have proved:\\

%\begin{Theorem}
Assume the non-contextuality, measurement, eigenstate and e.p.r. assumptions.  Then there exits a QRNG which generates a bi-immune binary sequence.
%\end{Theorem}

We further note that this result is more general than that proved in Ref.~\cite{svozil-2006-ran}
and does not require any assumption about the uniformity of the bits produced.

\subsection{Experimental robustness}
\label{2012-incomput-proofsCJ-er}

%{\color{cyan}
Before we proceed to describe an explicit realization of the QRNG described above, we wish to briefly make a couple of points on the robustness of this certification by value indefiniteness to experimental imperfections.

We can describe the measurement context more generally by the spin observable $\textsf{S}(\theta,\phi)$,
where $\theta$ and $\phi$ are the polar and azimuthal angles respectively, and we thus have $\textsf{S}_x=\textsf{S}(\pi/2,0)$
and
%{\color{blue}
 $\textsf{S}_z=\textsf{S}(0,0)$.
%}
Explicitly, this operator is represented in matrix form as
\begin{equation}
\textsf{S}(\theta,\phi )
=
\left(
\begin{array}{ccc}
 \cos (\theta ) & \frac{e^{-i \phi } \sin (\theta )}{\sqrt{2}} & 0 \\
 \frac{e^{i \phi } \sin (\theta )}{\sqrt{2}} & 0 & \frac{e^{-i \phi } \sin (\theta )}{\sqrt{2}} \\
 0 & \frac{e^{i \phi } \sin (\theta )}{\sqrt{2}} & -\cos (\theta )
\end{array}
\right).
\end{equation}
Misalignment and imperfection in the experimental setup will, in general, lead to angles $\theta$ and $\phi$ differing slightly from $\pi/2$ and 0 respectively.
While a change in $\phi$ only induces a phase-shift and does not alter the probability of measuring any particular eigenvalue, a change in $\theta$ will alter the probabilities of detection.
However, a detailed calculation shows that \begin{equation}\label{eqn:genProb}|\iprod{\textsf{S}_z=0}{\textsf{S}(\theta,\phi)=\pm 1}|=\sin\theta/\sqrt{2},\end{equation} and the difference in probabilities of measuring a bit as 0 or 1 is not affected by such a change in $\theta$.
This is in distinct contrast to setups based on single beam splitters, in which misalignment introduces bias into the distribution of bits.

From Corollary~\ref{col:VIContext}, we see that the QRNG will provide bits by measurement of $\textsf{S}(\theta,\phi)$ that are certified by value indefiniteness whenever $\sqrt{\frac{5}{14}} \le |\iprod{\textsf{S}_z=0}{\textsf{S}(\theta,\phi)=\pm 1}| \le \frac{3}{\sqrt{14}}\raisebox{.8mm}{.}$
This inequality is, from Eq.~(\ref{eqn:genProb}), readily seen to be satisfied for angles $\frac{\pi}{3} \le \theta \le \frac{2 \pi}{3}\raisebox{.8mm}{.}$
This has the important consequence of protecting against inevitable experimental misalignment: even in the presence of relatively significant misalignment, the device would produce bits which are certified by value indefiniteness.
Otherwise, if the certification only held for the ideal case of $\frac{\pi}{2}\raisebox{.8mm}{,}$ any experimental imperfections would render this theoretical result inapplicable to any real experiment.

Furthermore, calculation shows that $\iprod{\textsf{S}_z=0}{\textsf{S}(\theta,\phi)=0}=\cos\theta$, and since $\iprod{\textsf{S}_z=0}{\textsf{S}(\theta,\phi)=0}=0$ only when $\theta=\frac{\pi}{2}\raisebox{.8mm}{,}$ a third detector measuring the $\ket{\textsf{S}(\theta,\phi)=0}$ outcome could be employed to monitor the degree of misalignment present in the system.
The number of counts at this detector would allow quantification of the angle $\theta$,
and provide an experimental method to test that the condition of
$\sqrt{\frac{5}{14}} \le \iprod{\textsf{S}_z=0}{\textsf{S}(\theta,\phi)=\pm 1}\le \frac{3}{\sqrt{14}}$ is indeed being realized.
Without monitoring this third outcome, one could not determine from the $\ket{\textsf{S}(\theta,\phi)=\pm 1}$ counts alone if this is indeed the case.
%}

\section{Generalised Beam Splitter Quantum Random Number Generator}
\label{sec:experiment}

In this section we describe a physical realization of the QRNG described in the previous section.
Since it is not particularly feasible to directly use spin-1 particles
in a QRNG with an acceptably high bit-rate, the realization we present uses photons and is expressed
in terms of generalised beam splitters \cite{rzbb,zukowski-97,svozil-2004-analog}.
%{\color{blue}
Generalised beam splitters are based on the possibility to (de)compose
an arbitrary unitary transformation $\textsf{U}_n$ in $n$-dimensional Hilbert space
into two-dimensional transformations $\textsf{U}_2$ of two-dimensional subspaces thereof;
a possibility that can be used to parametrize $\textsf{U}_n$ \cite{murnaghan}.
In more physical terms, they amount to serial stacks of phase shifters and beam splitters
in the form of an interferometer with $n$ input and output ports,  beam splitter
such that the beam  splitters affect only two (sub\nobreakdash-)paths
which, together with the phase shifters (affecting single paths at any one time),
realize the associated transformations in $U(2)$.
These components can be conveniently arranged into ``triangle form''
with $n$ in- and out-bound beam paths.

For the sake of an explicit demonstration, consider an orthonormal cartesian standard basis
$\vert 1\rangle \equiv (1,0,0)$,
$\vert 0\rangle \equiv  (0,1,0)$, and
$\vert -1\rangle \equiv  (0,0,1)$.
Then, in order to realize observables such as the spin state observables
$S(\theta,\phi)$
and, in particular, spin states  measured along the $x$-axis;
that is, for $\theta = \frac{\pi}{2}$ and $\phi = 0$,
\begin{equation}
\textsf{S}_x   =
\textsf{S}\left(\frac{\pi}{2},0 \right)
=
\left(
\begin{array}{ccc}
 0 & \frac{1}{\sqrt{2}} & 0 \\
 \frac{1}{\sqrt{2}} & 0 & \frac{1}{\sqrt{2}} \\
 0 & \frac{1}{\sqrt{2}} & 0
\end{array}
\right)
\end{equation}
in terms of generalised beam  splitters, the
associated normalised row eigenvectors
\begin{equation}
\begin{array}{l}
 \ket{\textsf{S}_x : +1} \equiv  \frac{1}{2}\left(1,\sqrt{2},1\right),\\
 \ket{\textsf{S}_x : 0} \equiv  \frac{1}{\sqrt{2}}\left(1,0,-1\right),\\
 \ket{\textsf{S}_x : -1} \equiv  \frac{1}{2}\left(1,-\sqrt{2},1\right)
\end{array}
\end{equation}
have to be ``stacked'' on top of one another \cite{rzbb}, thereby forming a unitary matrix $\textsf{U}_x $
which corresponds to the spin state operator $\textsf{S}_x$ for spin state measurements along the $x$-axis;
more explicitly,
\begin{equation}
\textsf{U}_x
=
\frac{1}{2}
\left(
\begin{array}{cccc}
1&\sqrt{2}&1\\
\sqrt{2}&0&-\sqrt{2}\\
1&-\sqrt{2}&1
\end{array}
\right).
\end{equation}

\if01
A beam splitter and phase shifter pair can be represented by the unitary matrix
\begin{equation}
        \begin{pmatrix}
e^{i\phi}\sin\omega & e^{i \phi}\cos\omega\\
\cos\omega & -\sin\omega
\end{pmatrix},
\end{equation}
where $\phi$ is represents the phase of the external phase shifter on one of the output ports,
and $\omega$ relates to the transmittance of the beam splitter by the equation $T=\sin^2\omega$.
The beam splitter arrangement to realize $\textsf{U}_x$
can be found by transforming $U_x$ into the identity matrix $I_3$ by successive right multiplication by $U(2)$ matrices---each one making an individual off-diagonal element equal to zero~\cite{rzbb}.

In our specific case, we have
\begin{equation}
        \textsf{U}_x \cdot
        \begin{pmatrix}
        1 & 0 & 0\\
        0 & \frac{1}{\sqrt{3}} & \sqrt{\frac{2}{3}}\\
        0 & \sqrt{\frac{2}{3}} & -\frac{1}{\sqrt{3}}
        \end{pmatrix}
        \cdot
        \begin{pmatrix}
        \frac{\sqrt{3}}{2} & 0 & \frac{1}{2}\\
        0 & 1 & 0\\
        \frac{1}{2} & 0 & -\frac{\sqrt{3}}{2}
        \end{pmatrix}
        \cdot
        \begin{pmatrix}
        \frac{1}{\sqrt{3}} & \sqrt{\frac{2}{3}} & 0\\
        \sqrt{\frac{2}{3}} & -\frac{1}{\sqrt{3}} & 0\\
        0 & 0 & 1
        \end{pmatrix}
        =I_3.
\end{equation}
This corresponds to three beam splitters with $$\omega_{3,2}=\omega_{2,1}=\tan^{-1}\left(\frac{1}{\sqrt{2}}\right),\ \omega_{3,1}=\frac{\pi}{3}\raisebox{.8mm}{,}$$
where $\omega_{i,j}$ is the parameter for the beam  splitter operating on beams $i$ and $j$ (beams 1,2,3 correspond to $\textsf{S}_z=+1,0,-1$ respectively). No external phase shifters are required as each $\phi=0$.
The transmittance parameters give transmittances of $T_{3,2}=T_{2,1}=1/3$ and $T_{3,1}=3/4\raisebox{.8mm}{.}$
The physical realization of $\textsf{U}_x$ is depicted in Fig.~\ref{2012-incomput-proofsC-f1}.

\begin{figure}[ht]
\begin{center}

%TexCad Options
%\grade{\off}
%\emlines{\off}
%\beziermacro{\on}
%\reduce{\on}
%\snapping{\off}
%\quality{2.00}
%\graddiff{0.01}
%\snapasp{1}
%\zoom{10.00}
\unitlength 20mm
\linethickness{0.8pt}
\begin{picture}(6,5)(-3.0,-1.5)

\put(-2.6,3){\line(0,-1){3}}

\put(-2.6,2.5){\line(1,1){0.2}}
\put(-2.6,2.5){\line(-1,1){0.2}}

\put(-2.6,2){\line(1,0){1}}
\put(-2.6,1){\line(1,0){1}}
\put(-2.6,0){\line(1,0){1}}

\put(-1.6,2){\line(-1,1){0.2}}
\put(-1.6,2){\line(-1,-1){0.2}}
%\put(-2.6,2){\color{green} \oval(0.14,0.14)[rt]}
%\put(-2.47,2.12){\makebox(0,0)[cc]{\color{green} \tiny $\varphi$}}
%\put(-2.8,2.2){\color{green} \line(1,-1){0.4}}
%\put(-2.8,1.8){\color{green} \framebox(0.4,0.4)[cc]{}}
%\put(-2.4,2.3){\makebox(0,0)[lc]{\color{green} $T=1$}}
%\put(-2.4,1.6){\makebox(0,0)[lc]{\color{green} $T=\frac{2}{3}$}}
% \put(-2.2,1.9){\color{blue} \rule{0.4cm}{0.4cm} }
% \put(-2.2,2.25){\makebox(0,0)[lc]{\color{blue}$-\varphi$}}

\put(-1.6,1){\line(-1,1){0.2}}
\put(-1.6,1){\line(-1,-1){0.2}}
\put(-2.6,1){\color{green} \oval(0.14,0.14)[rt]}
%\put(-2.47,1.12){\makebox(0,0)[cc]{\color{green} \tiny $\frac{\pi}{2}$}}
\put(-2.47,1.12){\makebox(0,0)[cc]{\color{green} \tiny $ \pi $}}
\put(-2.8,1.2){\thicklines \color{green} \line(1,-1){0.4}}
\put(-2.8,0.8){\color{green} \framebox(0.4,0.4)[cc]{}}
\put(-2.4,0.6){\makebox(0,0)[lc]{\color{green} $T= 0 $}}
%\put(-2.4,0.6){\makebox(0,0)[lc]{\color{green} $T=\frac{1}{2}$}}
\put(-2.2,0.9){\color{blue} \rule{0.4cm}{0.4cm} }
% \put(-2.2,1.25){\makebox(0,0)[lc]{\color{blue}$-\frac{\pi}{2}$}}
\put(-2.2,1.25){\makebox(0,0)[lc]{\color{blue}$- \pi $}}

\put(-1.6,0){\line(-1,1){0.2}}
\put(-1.6,0){\line(-1,-1){0.2}}

\put(-1.2,1.7){\color{orange} \rule{0.2cm}{1.2cm} }
\put(-1.2,-0.3){\color{orange} \rule{0.2cm}{1.2cm} }

\put(1.22,1.3){\makebox(0,0)[lc]{\color{green} $T=\frac{1}{3}$}}
\put(1,1){\color{green} \oval(0.14,0.14)[rt]}
\put(1.13,1.12){\color{green} \makebox(0,0)[cc]{\tiny $\pi$}}
\put(0.8,1.2){\color{green} \line(1,-1){0.4}}
\put(0.8,0.8){\color{green} \framebox(0.4,0.4)[cc]{}}
\put(0.5,1){\line(1,0){1.00}}
\put(1,0.5){\line(0,1){1.00}}
%\put(0.9,0.5){\framebox(0.2,0.05)[cc]{}}
%\put(1.2,0.55){\makebox(0,0)[lc]{0}}
%\put(0.9,0.5){\color{blue}\framebox(0.2,0.05)[cc]{}}
%\put(0.9,0.5){\color{blue} \rule{0.4cm}{0.4cm} }
%\put(1.2,0.55){\makebox(0,0)[lc]{\color{blue}$-\frac{\pi}{2}$}}
\put(0.5,0){\line(1,0){1.00}}
\put(1,-0.5){\line(0,1){1.00}}
\put(1.5,0){\line(1,0){1.00}}
\put(2,-0.5){\line(0,1){1.00}}
\put(0,2){\line(1,0){0.50}}
\put(-0.48,2){\makebox(0,0)[cc]{$\vert \textsf{S}_z : -1 \rangle$}}
\put(3,-0.5){\line(0,-1){0.50}}
\put(3,-1){\line(1,1){0.2}}
\put(3,-1){\line(-1,1){0.2}}
\put(3,-1.2){\makebox(0,0)[cc]{$\vert \textsf{S}_x : +1 \rangle$}}
\put(0.5,2){\line(1,0){0.5}}
\put(1,2){\line(0,-1){0.5}}
\put(0,1){\line(1,0){0.50}}
\put(-0.48,1){\makebox(0,0)[cc]{$\vert \textsf{S}_z : 0 \rangle$}}
\put(2,-0.5){\line(0,-1){0.50}}
\put(2,-1){\line(1,1){0.2}}
\put(2,-1){\line(-1,1){0.2}}
\put(2,-1.2){\makebox(0,0)[cc]{$\vert \textsf{S}_x : 0 \rangle$}}
\put(1.5,1){\line(1,0){0.5}}
\put(2,1){\line(0,-1){0.5}}
\put(0,0){\line(1,0){0.50}}
\put(-0.48,0){\makebox(0,0)[cc]{$\vert \textsf{S}_z : +1 \rangle$}}
\put(1,-0.5){\line(0,-1){0.50}}
\put(1,-1){\line(1,1){0.2}}
\put(1,-1){\line(-1,1){0.2}}
\put(1,-1.2){\makebox(0,0)[cc]{$\vert \textsf{S}_x : -1 \rangle$}}
\put(2.5,0){\line(1,0){0.5}}
\put(3,0){\line(0,-1){0.5}}
\put(1.22,0.3){\makebox(0,0)[lc]{\color{green} $T=\frac{3}{4}$}}
\put(1,0){\color{green} \oval(0.14,0.14)[rt]}
\put(1.13,0.12){\color{green} \makebox(0,0)[cc]{\tiny $\pi$}}
\put(0.8,0.2){\color{green} \line(1,-1){0.4}}
\put(0.8,-0.2){\color{green} \framebox(0.4,0.4)[cc]{}}
\put(2.22,0.3){\makebox(0,0)[lc]{\color{green} $T=\frac{1}{3}$}}
\put(2,0){\color{green} \oval(0.14,0.14)[rt]}
\put(2.13,0.12){\makebox(0,0)[cc]{\color{green} \tiny $\pi$}}
\put(1.8,0.2){\color{green} \line(1,-1){0.4}}
\put(1.8,-0.2){\color{green} \framebox(0.4,0.4)[cc]{}}
% \put(1.9,0.5){\color{blue} \rule{0.4cm}{0.4cm} }
% \put(2.2,0.55){\makebox(0,0)[lc]{\color{blue}$ \frac{\pi}{2}$}}
% \put(0.9,1.5){\color{blue} \rule{0.4cm}{0.4cm} }
%\put(1.2,1.55){\makebox(0,0)[lc]{\color{blue}$-\frac{\pi}{2}$}}
% \put(0.9,-0.5){\color{blue} \rule{0.4cm}{0.4cm} }
% \put(1.2,-0.45){\makebox(0,0)[lc]{\color{blue}$-\frac{\pi}{2}$}}
% \put(1.9,-0.5){\color{blue} \rule{0.4cm}{0.4cm} }
%\put(2.2,-0.45){\makebox(0,0)[lc]{\color{blue}$-\frac{\pi}{2}$}}
%\put(2.9,-0.5){\framebox(0.2,0.05)[cc]{}}
%\put(3.2,-0.45){\makebox(0,0)[lc]{0}}
\end{picture}
\end{center}
\caption{(Color online) Configuration of a random number generator
with a preparation and a measurement stage, including filters blocking
$\vert \textsf{S}_z : -1 \rangle$
and
$\vert \textsf{S}_z : +1 \rangle$.
(For ideal beam splitters, these filters would not be required.)
The measurement stage (right array) realizes a unitary quantum gate $\textsf{U}_x$, corresponding  to the projectors onto
the $\textsf{S}_x$ state observables for spin state measurements along the $x$-axis,
in terms of generalised beam splitters.0
\label{2012-incomput-proofsC-f1}}
\end{figure}
\fi

While many variations on the unitary matrix to represent a beam splitter exist~\cite{rzbb,zeilinger:882,Campos:1989fk,green-horn-zei},
without loss of generality we can represent an arbitrary $U(2)$ matrix realized by a beam splitter and external phase shift as
\begin{equation}
        \begin{pmatrix}
\sqrt{T} & i e^{i \phi}\sqrt{R}\\
i\sqrt{R} & e^{i \phi}\sqrt{T}
\end{pmatrix},
\end{equation}
where $\phi$ represents the phase of an external phase shifter on the second input port, and $T,R\in [0,1]$ are the transmittance and reflectance of the beam splitter respectively (with $R+T=1$).
The beam splitter arrangement to realize $\textsf{U}_x$
can be found by transforming $\textsf{U}_x$ into the identity matrix $I_3$ by successive right-multiplication by adjoints of $U(2)$ matrices of the above form---each one making an individual off-diagonal element equal to zero---followed by a final set of phase shifters~\cite{rzbb}.

In our specific case, we have
\begin{equation}
        \begin{pmatrix}
        1 & 0 & 0\\
        0 & -i & 0\\
        0 & 0 & -i
        \end{pmatrix}
        \cdot
        \begin{pmatrix}
        \sqrt{\frac{1}{3}} & \sqrt{\frac{2}{3}} & 0\\
        i\sqrt{\frac{2}{3}} & -i\sqrt{\frac{1}{3}} & 0\\
        0 & 0 & 1
        \end{pmatrix}
        \cdot
        \begin{pmatrix}
        \sqrt{\frac{3}{4}} & 0 & -i\sqrt{\frac{1}{3}}\\
        0 & 1 & 0\\
        i\sqrt{\frac{1}{4}} & 0 & -\sqrt{\frac{3}{4}}
        \end{pmatrix}
        \cdot
        \begin{pmatrix}
        1 & 0 & 0\\
        0 & \sqrt{\frac{1}{3}} & \sqrt{\frac{2}{3}}\\
        0 & i\sqrt{\frac{2}{3}} & -i\sqrt{\frac{1}{3}}\\
        \end{pmatrix}
        =\textsf{U}_x.
\end{equation}
This corresponds to three beam splitters with transmittances $T_{3,2}=T_{2,1}=\frac{1}{3}$, $T_{3,1}=\frac{3}{4}$, and phases $\phi_{3,2}=\phi_{2,1}=-\pi/2$, $\phi_{3,1}=\pi \raisebox{.8mm}{,}$
where $T_{i,j}$ and $\phi_{i,j}$ are the parameters for the beam splitter operating on beams $i$ and $j$ (beams 1,2,3 correspond to $\textsf{S}_z=+1,0,-1$ respectively).
Two final phase shifts of $-\pi/2$ are needed on beams 2 and 3.
The physical realization of $\textsf{U}_x$ is depicted in Fig.~\ref{2012-incomput-proofsC-f1}.

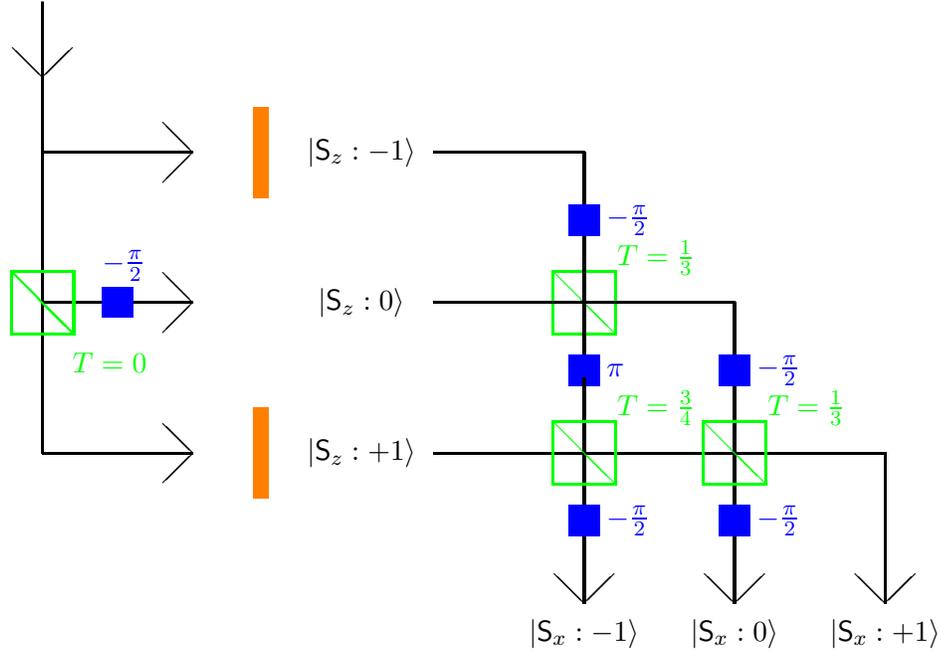
\begin{figure}[ht]
\begin{center}

%TexCad Options
%\grade{\off}
%\emlines{\off}
%\beziermacro{\on}
%\reduce{\on}
%\snapping{\off}
%\quality{2.00}
%\graddiff{0.01}
%\snapasp{1}
%\zoom{10.00}
\unitlength 20mm
\linethickness{0.8pt}
\begin{picture}(6,5)(-3.0,-1.5)

\put(-2.6,3){\line(0,-1){3}}

\put(-2.6,2.5){\line(1,1){0.2}}
\put(-2.6,2.5){\line(-1,1){0.2}}

\put(-2.6,2){\line(1,0){1}}
\put(-2.6,1){\line(1,0){1}}
\put(-2.6,0){\line(1,0){1}}

\put(-1.6,2){\line(-1,1){0.2}}
\put(-1.6,2){\line(-1,-1){0.2}}
%\put(-2.6,2){\color{green} \oval(0.14,0.14)[rt]}
%\put(-2.47,2.12){\makebox(0,0)[cc]{\color{green} \tiny $\varphi$}}
%\put(-2.8,2.2){\color{green} \line(1,-1){0.4}}
%\put(-2.8,1.8){\color{green} \framebox(0.4,0.4)[cc]{}}
%\put(-2.4,2.3){\makebox(0,0)[lc]{\color{green} $T=1$}}
%\put(-2.4,1.6){\makebox(0,0)[lc]{\color{green} $T=\frac{2}{3}$}}
% \put(-2.2,1.9){\color{blue} \rule{0.4cm}{0.4cm} }
% \put(-2.2,2.25){\makebox(0,0)[lc]{\color{blue}$-\varphi$}}

\put(-1.6,1){\line(-1,1){0.2}}
\put(-1.6,1){\line(-1,-1){0.2}}
%\put(-2.6,1){\color{green} \oval(0.14,0.14)[rt]}
%\put(-2.47,1.12){\makebox(0,0)[cc]{\color{green} \tiny $\frac{\pi}{2}$}}
%\put(-2.47,1.12){\makebox(0,0)[cc]{\color{green} \tiny $ \pi $}}
\put(-2.8,1.2){\thicklines \color{green} \line(1,-1){0.4}}
\put(-2.8,0.8){\color{green} \framebox(0.4,0.4)[cc]{}}
\put(-2.4,0.6){\makebox(0,0)[lc]{\color{green} $T= 0 $}}
%\put(-2.4,0.6){\makebox(0,0)[lc]{\color{green} $T=\frac{1}{2}$}}
\put(-2.2,0.9){\color{blue} \rule{0.4cm}{0.4cm} }
 \put(-2.2,1.25){\makebox(0,0)[lc]{\color{blue}$-\frac{\pi}{2}$}}
%\put(-2.2,1.25){\makebox(0,0)[lc]{\color{blue}$- \pi $}}

\put(-1.6,0){\line(-1,1){0.2}}
\put(-1.6,0){\line(-1,-1){0.2}}

\put(-1.2,1.7){\color{orange} \rule{0.2cm}{1.2cm} }
\put(-1.2,-0.3){\color{orange} \rule{0.2cm}{1.2cm} }

\put(1.22,1.3){\makebox(0,0)[lc]{\color{green} $T=\frac{1}{3}$}}
%\put(1,1){\color{green} \oval(0.14,0.14)[rt]}
%\put(1.13,1.12){\color{green} \makebox(0,0)[cc]{\tiny $\pi$}}
\put(0.8,1.2){\color{green} \line(1,-1){0.4}}
\put(0.8,0.8){\color{green} \framebox(0.4,0.4)[cc]{}}
\put(0.5,1){\line(1,0){1.00}}
\put(1,0.5){\line(0,1){1.00}}
%\put(0.9,0.5){\framebox(0.2,0.05)[cc]{}}
%\put(1.2,0.55){\makebox(0,0)[lc]{0}}
%\put(0.9,0.5){\color{blue}\framebox(0.2,0.05)[cc]{}}
\put(0.9,0.45){\color{blue} \rule{0.4cm}{0.4cm} }
\put(1.15,0.55){\makebox(0,0)[lc]{\color{blue}$\pi$}}
\put(0.5,0){\line(1,0){1.00}}
\put(1,-0.5){\line(0,1){1.00}}
\put(1.5,0){\line(1,0){1.00}}
\put(2,-0.5){\line(0,1){1.00}}
\put(0,2){\line(1,0){0.50}}
\put(-0.48,2){\makebox(0,0)[cc]{$\vert \textsf{S}_z : -1 \rangle$}}
\put(3,-0.5){\line(0,-1){0.50}}
\put(3,-1){\line(1,1){0.2}}
\put(3,-1){\line(-1,1){0.2}}
\put(3,-1.2){\makebox(0,0)[cc]{$\vert \textsf{S}_x : +1 \rangle$}}
\put(0.5,2){\line(1,0){0.5}}
\put(1,2){\line(0,-1){0.5}}
\put(0,1){\line(1,0){0.50}}
\put(-0.48,1){\makebox(0,0)[cc]{$\vert \textsf{S}_z : 0 \rangle$}}
\put(2,-0.5){\line(0,-1){0.50}}
\put(2,-1){\line(1,1){0.2}}
\put(2,-1){\line(-1,1){0.2}}
\put(2,-1.2){\makebox(0,0)[cc]{$\vert \textsf{S}_x : 0 \rangle$}}
\put(1.5,1){\line(1,0){0.5}}
\put(2,1){\line(0,-1){0.5}}
\put(0,0){\line(1,0){0.50}}
\put(-0.48,0){\makebox(0,0)[cc]{$\vert \textsf{S}_z : +1 \rangle$}}
\put(1,-0.5){\line(0,-1){0.50}}
\put(1,-1){\line(1,1){0.2}}
\put(1,-1){\line(-1,1){0.2}}
\put(1,-1.2){\makebox(0,0)[cc]{$\vert \textsf{S}_x : -1 \rangle$}}
\put(2.5,0){\line(1,0){0.5}}
\put(3,0){\line(0,-1){0.5}}
\put(1.22,0.3){\makebox(0,0)[lc]{\color{green} $T=\frac{3}{4}$}}
%\put(1,0){\color{green} \oval(0.14,0.14)[rt]}
%\put(1.13,0.12){\color{green} \makebox(0,0)[cc]{\tiny $\pi/2$}}
\put(0.8,0.2){\color{green} \line(1,-1){0.4}}
\put(0.8,-0.2){\color{green} \framebox(0.4,0.4)[cc]{}}
\put(2.22,0.3){\makebox(0,0)[lc]{\color{green} $T=\frac{1}{3}$}}
%\put(2,0){\color{green} \oval(0.14,0.14)[rt]}
%\put(2.13,0.12){\makebox(0,0)[cc]{\color{green} \tiny $\pi$}}
\put(1.8,0.2){\color{green} \line(1,-1){0.4}}
\put(1.8,-0.2){\color{green} \framebox(0.4,0.4)[cc]{}}
 \put(1.9,0.45){\color{blue} \rule{0.4cm}{0.4cm} }
 \put(2.15,0.55){\makebox(0,0)[lc]{\color{blue}$ -\frac{\pi}{2}$}}
 \put(0.9,1.45){\color{blue} \rule{0.4cm}{0.4cm} }
\put(1.15,1.55){\makebox(0,0)[lc]{\color{blue}$-\frac{\pi}{2}$}}
 \put(0.9,-0.55){\color{blue} \rule{0.4cm}{0.4cm} }
 \put(1.15,-0.45){\makebox(0,0)[lc]{\color{blue}$-\frac{\pi}{2}$}}
 \put(1.9,-0.55){\color{blue} \rule{0.4cm}{0.4cm} }
\put(2.15,-0.45){\makebox(0,0)[lc]{\color{blue}$-\frac{\pi}{2}$}}
%\put(2.9,-0.5){\framebox(0.2,0.05)[cc]{}}
%\put(3.2,-0.45){\makebox(0,0)[lc]{0}}
\end{picture}
\end{center}
\caption{(Color online) Configuration of a random number generator
with a preparation and a measurement stage, including filters blocking
$\vert \textsf{S}_z : -1 \rangle$
and
$\vert \textsf{S}_z : +1 \rangle$.
(For ideal beam  splitters, these filters would not be required.)
The measurement stage (right array) realizes a unitary quantum gate $\textsf{U}_x$, corresponding  to the projectors onto
the $\textsf{S}_x$ state observables for spin state measurements along the $x$-axis,
in terms of generalised beam splitters.
\label{2012-incomput-proofsC-f1}}
\end{figure}

This setup is equivalent to the spin-1 setup for which we are guaranteed value indefiniteness under the conditions discussed in the previous section.
Even in the case of non-perfectly configured beam splitters, as long as the observable corresponding to the unitary transformation implemented by the beam splitters has eigenstates $\ket{a=\pm 1}$ (corresponding to output ports 1 and 3) which fall within the bounds $\sqrt{\frac{5}{14}} \le \iprod{\textsf{S}_z=0}{a=\pm 1}\le \frac{3}{\sqrt{14}}$ then the QRNG will still be protected by value indefiniteness.
As discussed in the previous section, this allows for a considerable amount of error (more than would be desirable with respect to deviation from 50/50 bias) under which value indefiniteness is still guaranteed.

\if01
An inspection of the generalised beam splitter setup reveals that it actually consists of a standard 50:50 beam splitter device
used for quantum coin tosses
\cite{svozil-qct,rarity-94,zeilinger:qct,stefanov-2000,0256-307X-21-10-027,wang:056107,fiorentino:032334,svozil-2009-howto,10.1038/nature09008},
with the addition of additional optical components remaining ``idle'' or ``dormant''
in the case of ideal components.
This is due to the effectively quasi two-dimensionality of the dichotomic outcome setup,
which may be not so evident in the case of spin state measurements on spin one particles.
However, one should keep in mind that
the device is capable of performing transformations and rendering observables in full three dimensional Hilbert space,
with all the counterfactual potentialities present.
This in a sense, is not dissimilar to counterfactual computation  \cite{elitzur-vaidman:1}.

{\color{blue}
A more general configuration that involves experimental robustness discussed earlier in section \ref{2012-incomput-proofsCJ-er}
can be obtained by considering
\begin{equation}
\begin{array}{l}
 \ket{\textsf{S}(\theta ,0) : +1} \equiv     \left( \cos ^2 \frac{\theta }{2} , \frac{\sin   \theta   }{\sqrt{2}} ,\sin^2 \frac{\theta }{2} \right),\\
 \ket{\textsf{S}(\theta ,0) : 0}  \equiv     \left( \frac{\sin  \theta}{\sqrt{2}} ,-   \cos  \theta    ,-\frac{\sin  \theta }{\sqrt{2}} \right),\\
 \ket{\textsf{S}(\theta ,0) : -1} \equiv       \left( \sin ^2 \frac{\theta }{2} ,-\frac{\sin   \theta    }{\sqrt{2}} ,\cos^2 \frac{\theta }{2} \right),
\end{array}
\end{equation}
thereby forming a unitary matrix $\textsf{U}_x $
which correspond to the spin state operator $\textsf{S}(\theta ,0)$ for spin state measurements along $\theta$;
more explicitly,
\begin{equation}
\textsf{U}(\theta ,0)
=
\left(
\begin{array}{cccc}
\cos ^2 \frac{\theta }{2} &\frac{\sin   \theta   }{\sqrt{2}} &\sin^2 \frac{\theta }{2} \\
\frac{ \sin  \theta}{\sqrt{2}}  &-   \cos  \theta    &-\frac{\sin  \theta  }{\sqrt{2}}  \\
 \sin ^2 \frac{\theta }{2} &-\frac{\sin   \theta   }{\sqrt{2}}  &\cos^2 \frac{\theta }{2}
\end{array}
\right).
\end{equation}

}
\fi
\section{Monitoring Value Indefiniteness}

The rendition of value indefiniteness requires a quantised system with at least three mutually exclusive outcomes,
corresponding to an associated Hilbert space dimension equal to the number of these outcomes---a direct consequence of the Kochen-Specker theorem.
%as well as related \cite{PhysRevLett.103.050401} or unrelated \cite{Pit-94}
%Bell-type inequalities representing classical bounds on joint probabilities.
%The Kochen-Specker Theorem, in turn, is obtained by a proof by contradiction (or {\it ``reductio ad absurdum''}),
%which appears to be a finite, constructive formal argument from a metamathematical point of view.

Of course, if one is willing to accept physical value indefiniteness
based purely on formal  Hilbert space models of quantum mechanics \cite{v-neumann-55},
there is no further need of empirical evidence.
In this line of thinking, Theorem~\ref{thm:twonotvaluedefinite}, and hence the quantum value indefiniteness resulting from it via Corollary~\ref{cor:twonotvaluedefinite}, needs no more empirical corroboration
than the arithmetic fact that, in Peano arithmetic with standard addition, one plus one equals two.

%{\color{cyan}
%Nevertheless, given the highly nontrivial assumptions entering these results---in particular, the mutual physical coexistence of
%complementary observables---and the practical inevitable experimental  imperfections, it is important to have amonitoring process.
%}

% it might be considered more appropriate to at least possess some indication, hints or limited certifications relating the nonclassical features
%of these types of arguments with empirical findings; even if these findings do not constitute a proof
%in the strict meaning of the term.

\if01
{\color{red}
There are some indirectly obtained physical remedies which are often referred to as
``proofs of the Kochen-Specker theorem,''
or to ``proofs of quantum contextuality''
\cite{hasegawa:230401,Bartosik-09,kirch-09,PhysRevLett.103.160405,Lapkiewicz-11}.
Thus if one is
willing to either accept as proofs certain Bell type arguments about a series of successive (``one-after-another'') outcomes \cite{PhysRevLett.103.160405},
or to accept the type of nonclassical outcomes typically encountered in
empirical realizations of Greenberger-Horne-Zeilinger type arguments \cite{PhysRevLett.82.1345,panbdwz},
then indeed value indefiniteness or contextuality are physically provable.

With this {\it caveat} we propose to {\em monitor} all protocols invoking value indefiniteness or contextuality by
physical experiments testing the nonclassical nature of the quantised systems involved.
This could be either achieved by invoking the aforementioned violations of Bell-type inequalities \cite{10.1038/nature09008}, or by
experiments invoking  Greenberger-Horne-Zeilinger type measurements.
The latter ones may be preferable, since, at least ideally, they do not involve any statistics,
but require a violation of local realism at every triple of outcomes.
Of course, in order to monitor this nonclassical behaviour, protocols must be adapted to
include such measurements involving multipartite correlations.

In the concrete beam splitter configuration mentioned earlier one could monitor the nonclassicality of the setup
by first producing the (only \cite{schimpf-svozil}) singlet state
$\frac{1}{{\sqrt{3}}}\left(-|0,0\rangle+|-1,1\rangle+|1,-1\rangle\right)$,
and then subjecting the two entangled quanta to beam splitter analysis.
We refer to Refs.~\cite{zukowski-97} and \cite{svozil-2004-analog} for two principal approaches.
The Bell-type inequalities violated in this case are the two particle  three properties per particle
inequalities derived in Refs.~\cite{2000-poly,collins-gisin-2003,sliwa-2003}.
}
\fi

QRNGs which monitor Bell-inequality violation simultaneously with
bit-generation have been proposed in the literature~\cite{10.1038/nature09008,Vazirani28072012}.
Given the non-trivial assumptions used in the proof of  Theorem~\ref{thm:twonotvaluedefinite}---in particular, the mutual physical coexistence of
complementary observables---should our QRNG be monitored in this way too, in addition to  value indefiniteness certification?

First,  we stress that, in contrast with our proposed QRNG,  the aforementioned devices require an initial random seed and hence
operate as a secure randomness {\em expander}, rather than {\em generator}: The quality
of randomness produced by such a device depends crucially upon the quality
of randomness of the seed.

Second, violation of Bell-inequalities alone is a purely statistical phenomenon and only
indicates non-classical correlations: in no way does it necessitate a Hilbert-space
structure and hence it cannot give the certification of (strong) incomputability that
our proposal does via value indefiniteness.

Third, in the case that our QRNG is treated as an untrusted-device, as is common in
cryptography (due to the users inability to verify the device's workings), the set up
could be modified to test such inequalities. This is the scenario in which monitoring
inequality violation has most to offer, since violation of Bell-inequalities
can be derived from Kochen-Specker type arguments \footnote{Such violations are
often referred  to as ``proofs of the Kochen-Specker theorem,'' or ``proofs of quantum contextuality''\cite{hasegawa:230401,Bartosik-09,kirch-09,PhysRevLett.103.160405,Lapkiewicz-11}.}
and thus gives some indication of non-classicality in the absence of trust in the device,
even if it cannot guarantee incomputability. An even better monitoring method---which might necessitate a revision of our current QRNG
set up---may use the type of
non-classical outcomes typically encountered in empirical realizations of Greenberger-Horne-Zeilinger
type arguments \cite{PhysRevLett.82.1345,panbdwz} because, at least ideally, they do not involve any statistics,
but require a violation of local realism  at every triple of outcomes.

To summarize, we have presented
a formal conceptualization of {\em value (in-)definiteness,}
and proven that there always exists an admissible assignment function making a \emph{single} observable value definite;
one cannot hope to prove \emph{all} observables are value indefinite.
%and affirmatively answered the possibility of a {\em single} observable being value definite.
We also showed that, in an extension of the Kochen-Specker theorem,
after preparing a pure state in three-dimensional Hilbert space,
certain precisely identified observables are {\em provably value indefinite.}

We have applied these results to a proposal to generate bit sequences by
a quantum random number generator. Any such sequence is, as we showed, then ``certified by'' quantum value indefiniteness
(in the sense of the Bell-, Greenberger-Horne-Zeilinger-, and Kochen-Specker theorems) to produce a strongly incomputable sequence of bits.

%{\em How far}
%one can in principle restrict the value definite co-existence of more than one context remains an open question.
To what extent we can guarantee value indefiniteness remains an open question.
We know that not all observables can be value indefinite,
and at least those in the star-shaped setup of Fig.~\ref{2012-incomputability-proofs-f2}
can be guaranteed to be, but how far does this value indefiniteness go?
We conjecture that this is as far as one can go;
that {\em only a single} observable in the Hilbert space can be assigned  the value one,
and only those orthogonal to the said observable can be assigned the value 0---any other observables must,
 under the assumption of non-contextuality be value indefinite.
\section*{Acknowledgements}
We are grateful to Kohtaro Tadaki for insightful comments which improved the paper, as well as the anonymous referees who provided helpful comments.
We thank Michael Reck for the code producing the generalised beam splitter setup for an arbitrary unitary transformation.
Abbott, Calude and Svozil have been supported in part by Marie Curie FP7-PEOPLE-2010-IRSES Grant RANPHYS.
Calude's contribution was done in part during his tenure as Visiting Fellow of the Isaac Newton Institute for Mathematical Sciences (June--July 2012). Conder has been supported in part by a University of Auckland Summer Scholarship (2012).
Svozil's contribution was done in part during his visiting honorary appointment at the University of Auckland (February--March 2012),
and a visiting professorship at the University of Cagliari (May--July 2012).

%\bibliographystyle{plain}
%\bibliography{svozil.bib}%,alastairBib,alastairBib.bib,hypercomputation.bib}

%merlin.mbs apsrev4-1.bst 2010-07-25 4.21a (PWD, AO, DPC) hacked
%Control: key (0)
%Control: author (0) dotless jnrlst
%Control: editor formatted (1) identically to author
%Control: production of article title (0) allowed
%Control: page (1) range
%Control: year (0) verbatim
%Control: production of eprint (0) enabled
%

\end{document}